\renewcommand\section{\@startsection {section}{1}{\z@}%
                                   {-3.5ex \@plus -1ex \@minus -.2ex}%
                                   {2.3ex \@plus.2ex}%
                                   {\normalfont\bfseries}}
\def\@maketitle{%
  \newpage
  \null
  \vskip 2em%
  \begin{center}%
  \let \footnote \thanks
    {\large \@title \par}%
    \vskip 1.5em%
    {\normalsize
      \lineskip .5em%
      \begin{tabular}[t]{c}%
        \@author
      \end{tabular}\par}%
    \vskip 1em%
  \end{center}%
  \par
  \vskip 1.5em}
\newtheorem{The}{Theorem}
\newtheorem{Lem}[The]{Lemma}
\newtheorem{Cor}[The]{Corollary}
\newtheorem{Conj}[The]{Conjecture}
\newtheorem{Exa}{Example}
\newtheorem{Def}[The]{Definition}
\newenvironment{proof}{\noindent\textbf{Proof: }}{\hfill \small $\Box$}
\newtheorem{Pro}[The]{Proposition}
\newcommand{\lra}{\longrightarrow}
\newcommand{\raw}{\rightarrow}
\newcommand{\Rawe}{\Rightarrow\!\!\!^*\;}
\newcommand{\prt}[1]{\langle #1\rangle}
\newcommand{\sumat}[1]{\sum\limits_{#1}}
\newcommand{\maL}{\mathcal{L}}
\newcommand{\vdashe}{\;\;^*\!\!\!\!\!\!\vdash\;}
\newcommand{\nla}{{\sc nla}\xspace}
\newcommand{\dla}{{\sc dla}\xspace}
\newcommand{\pda}{{\sc pda}\xspace}
\newcommand{\nfa}{{\sc nfa}\xspace}
\newcommand{\slnf}{{\sc slnf}\xspace}
\newcommand{\lnf}{{\sc lnf}\xspace}
\begin{document}

\title{{\bf {\Large Some Subclasses of Linear Languages based on Nondeterministic Linear Automata}} \\
\vspace{.5cm} {\bf  Benjam\'in Bedregal} \\ 
\small Departamento de Inform\'atica e Matem\'atica Aplicada, \\ 
\small Universidade Federal do Rio Grande do Norte \\ 
\small \textit{bedregal@dimap.ufrn.br} 
}

\maketitle

\begin{abstract} 
In this paper we consider the class of
$\lambda$-nondeterministic linear automata as a model of the class of linear
languages. As usual in other automata
models, $\lambda$-moves do not increase the acceptance power. The main contribution of this paper
is to introduce the deterministic linear automata and
even linear automata, i.e. the natural restriction of
nondeterministic linear automata for the deterministic and even
linear language classes, respectively. In particular, there are different, but not equivalents,
proposals for the class of ``deterministic'' linear languages. We proved here that the class of 
languages accepted by the deterministic linear automata are not contained in any of the these classes and 
in fact they properly contain these classes. Another, contribution is the generation of an infinite hierarchy of formal languages, 
going from the class of languages accepted by deterministic linear automata and achieved, in the limit, the class of linear languages.
\end{abstract}
\vspace{12pt}\noindent{\bf Keywords:} 
  Linear languages, nondeterministic linear automata, deterministic linear automata, deterministic linear 
  languages, even linear languages, $\lambda$-moves and degree of explicit nondeterminism.


\section{Introduction}

The class of linear languages, also known by linear context-free
languages, is situated in the  extended Chomsky hierarchy  between
the classes of regular and context-free languages. Linear grammars 
(the more well known model for the class of linear languages) allow to have a control on 
matches between leftmost symbols with rightmost symbols of
a substring. This capability allow that the most of the more commonly used
examples of context-free languages, which are not regular languages, belong to this class, as for example 
the languages of palindromes and $\{a^nb^n: n\geq 1\}$. An example of a context-free
language which is not linear is $\{ a^mb^ma^nb^n\; :\; n,m\geq
1\}$ \cite{ABB97}.



The most usual models for linear languages are the linear  grammars and
some normal form for them have been proposed. In terms of automata counterpart, it is
possible to find at least six different models: Two-tape
nondeterministic finite automata   of a special type
\cite{Har78,Ros67},  finite transducer \cite{Mak03,Ros67},
one-turn pushdown automata \cite{ABB97,GS66,Har78,HE04,HU79},
right-left-monotone restarting automata \cite{Jur06}, $\lambda$-nondeterministic linear
automata \cite{Bed10,Bed11} and zipper finite-state automata \cite{SW11}.
Despite the merits of the four first models, they are not intrinsic, in the sense that
their definition includes external elements. 
For example, in the
two-tape nondeterministic finite automata of a special type, the
input is split into two tapes with the second one containing the
reverse of the right part of the input string.
Notice that,  all
the usual types of automata -- e.g. finite automata, pushdown
automata (\pda in short), Turing machines, etc. -- assume that the
input string is sequentially disposed, without any modification, in the input tape at the start of
execution that here we call natural condition. Thus, this model 
does not satisfy the natural condition 
and it is not intrinsic because the choice of how and where to divide
the input and the application of the reversion is external to the
model.

Analogously, the finite transducer model also presupposes
that externally the input string is divided into two parts which are
separated by a special symbol and where the second part is
reverted. Therefore, this model also presupposes a previous
knowledge of where the input must be divided. Thus, this model does not satisfy either the 
natural condition and it considers external agents. On the other hand, a
turn in a \pda is a move which decreases the stack and which is
preceded by a move that increases the stack, the one-turn \pda
model (a \pda which makes at most one turn in the stack).
Determining the maximal possible quantity of turns that a \pda can
make (i.e. consider all possible input strings) requires an
external control.  
Thus, one-turn \pda model is not intrinsic. Finally, the
right-left-monotone restarting automata introduced in \cite{Jur06}
is a restriction of a more general class of automata, the
restarting automata introduced in \cite{Jan95,Jan99}.    Since in
this automata the transitions work with strings instead of symbols
and they allow actions such as the reversion of strings, this model of
linear language is higher leveled and also more
complex than the other models. On the other hand, this model
requires that at each cycle the distance from the actual place,
where a rewrite takes place, to the right and left end of the tape
must not increase. However the verification of that condition can
not be made internally in the model, and therefore it is also a non intrinsic model.

The  $\lambda$-nondeterministic linear
automata, $\lambda$-\nla, which extends  the notion of nondeterministic finite
automata with $\lambda$-moves, $\lambda$-\nfa in short, and
therefore properly contains this class of automata. This
automata model, in our viewpoint, does not consider external elements,
it is not a subclass of automata used to model a broader class of
languages, it is  strongly inspired in a formal
norm introduced here for linear grammars and therefore it is simpler than the
previous models. 

The last model, zipper finite-state automata \cite{SW11}, is similar to the the previous one, in the 
sense that it read the input string from the left and from the right side, but differently of the $\lambda$-\nla, 
it is made simultaneoulsy and every moviment can read more than one symbols of each  side of the string.


In addition, the $\lambda$-moves in the usual classes of
automata are not essential for the models. Thus, for example,
$\lambda$-moves do not increase the acceptance power
of finite automata (see, for example, \cite{HU79}), of pushdown
automata (see, for example, \cite{Har78}) and of Turing machines
(by Church-thesis). Analogously, in \cite{Bed11} it was proved that 
$\lambda$-moves do not  increase either the acceptance power of $\lambda$-nondeterministic linear automata.

In this work we introduce the
deterministic version of this automata, called deterministic linear automata, \dla in short,  and compare
the class of languages accepted by \dla's with the class \textbf{DL} introduced by de la
Higuera and Oncina in  \cite{HO02} and with the class of linear
languages and deterministic context-free languages. We also
provide a characterization of the \nla which are equivalent to some \dla.
In  this paper we also determine an infinite hierarchy of  classes of formal languages which are among
 the class of languages accepted by \dla's and the class of linear languages. Finally, we show how the 
 class of even linear languages, introduced originally by Amar and Putzolu in
\cite{AP64},  can be captured in the nondeterministic linear automata model.

\section{Linear grammars}\label{sec-LG}

As usual a formal grammar $G$  is a tuple $\prt{V,T,S,P}$ where
$V$ is a finite set of variables, $T$ is a set of terminal symbols
and therefore $V\cap T=\emptyset$, $S\in V$ is the start variable
and $P\subseteq (V\cup T)^+\times (V\cup T)^*$ is the set of
productions. Ordered pairs $(x,y)\in P$ are denoted by
$x\rightarrow y$.

A grammar $G=\prt{V,T,S,P}$ is \textbf{linear}, if  each
production in $P$ has a variable in its left side  and has at
most  one variable in its right side, without restriction in the
position of this variable. A variable $A\in V$ will be called
\textbf{left linear} if in each production $A\rightarrow y\in P$,
either $y\in T^*$ or $y=Bz$ for some $z\in T^*$ and $B\in V$.
Analogously, a variable $A$ will be called \textbf{right linear}
if in each production $A\rightarrow y\in P$  either $y\in T^*$ or
$y=zB$ for some $z\in T^*$ and $B\in V$. A linear grammar $G$ is
in \textbf{linear normal form}, in short  \lnf, if each variable
$A\in V$ is left or right linear. Thus a variable $A$ in a linear
grammar is both left and right linear, just when  each production
having $A$ in the left side has at the right side either a string
of terminal symbols (including the empty string) or a single
variable.

Notice  that our linear normal form is subtly different from the
usual linear normal form (e.g. see \cite{Har78,Lin01,Sal73}) where
it is tolerated two productions with the same variable in the left
side, but with its right side containing a variable in the
leftmost and rightmost positions, respectively.

\begin{Exa} \label{ex-LG}
The grammar  $G=\prt{\{S\}, \{a,b\},S,P}$ where $P$ is given by

$$S\raw aSb \mid aSbb \mid aSbbb \mid ab \mid abb  \mid abbb $$

\noindent is linear but it is not in the \lnf.

On the other hand, the grammar  $G=\prt{\{S\}, \{a,b\},S,P}$ where $P$ is given by

$$S\raw aA \mid aA \mid aA \mid ab \mid abb  \mid abbb $$

$$A\raw Sb \mid Sbb \mid Sbbb$$

is in the \lnf.
\end{Exa}




As usual, for each $u,v,w\in (V\cup T)^*$ and $A\in V$,
$uAw\Rightarrow uvw$ if there is a production $A\raw v\in P$.
Let $\Rawe$ be the reflexive and transitive closure of
$\Rightarrow$. The \textbf{language generated} by a linear grammar
$G$ is

$$\maL(G)=\{w\in T^*\; :\; S\Rawe w\}$$

In Example \ref{ex-LG},

\begin{equation} \label{eq-NLL-ex}
 \maL(G)=\{a^mb^n\; :\; 1\leq m\leq n\leq 3m\}
\end{equation}



Languages generated by linear grammars are called \textbf{linear
languages}.

\begin{Lem}\label{lem-LNF} \cite{Bed11}
Let $G$ be a linear grammar. Then there exists a linear grammar
$G'$ in the \lnf such that $\maL(G)=\maL(G')$.
\end{Lem}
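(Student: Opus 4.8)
The plan is to build the required $G'$ from $G=\prt{V,T,S,P}$ in two stages: first normalizing the \emph{shape} of every individual production, and then splitting each variable into a ``left'' and a ``right'' copy so that no single variable carries productions of both shapes, which is exactly the mixing that our \lnf forbids but the classical normal form tolerates.

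\textbf{Stage 1 (removing interior variables).} Every linear production is either $A\raw w$ with $w\in T^*$ or $A\raw uBv$ with $u,v\in T^*$ and $B\in V$; the only offending case is $u,v$ both nonempty, where the variable sits strictly in the interior. For each such production I would introduce a fresh variable $C$ and replace it by $A\raw uC$ together with $C\raw Bv$: the first has right-linear shape ($zB$ with $z=u$), the second left-linear shape ($Bz$ with $z=v$), and $C$ occurs nowhere else, so the two together simulate $A\raw uBv$ exactly. Iterating over the finitely many interior productions yields a grammar $G_1$, equivalent to $G$, in which every production has one of the shapes $A\raw w$ ($w\in T^*$), $A\raw Bz$, or $A\raw zB$ ($z\in T^*$), a unit production $A\raw B$ counting as both. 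Each production now individually respects \lnf; the remaining defect is that a single variable may still carry productions of both shapes.

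\textbf{Stage 2 (splitting variables).} Since the admissible direction must be a property of the variable rather than of the production, I would duplicate each variable. Set $V'=\{A_L,A_R : A\in V\}\cup\{S'\}$ with $S'$ fresh, and define $P'$ as follows: $S'\raw S_L\mid S_R$; for each terminal production $A\raw w$ of $G_1$ the productions $A_L\raw w$ and $A_R\raw w$; for each left-shape production $A\raw Bz$ the productions $A_L\raw B_Lz$ and $A_L\raw B_Rz$; and for each right-shape production $A\raw zB$ the productions $A_R\raw zB_L$ and $A_R\raw zB_R$ (a unit production being placed in both groups). Then $A_L$ is left linear, $A_R$ is right linear, and $S'$ is both, so $G'=\prt{V',T,S',P'}$ is in \lnf. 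Branching each occurrence of $B$ into both $B_L$ and $B_R$ is the crucial device: once a variable has frozen a block of terminals on one side it is free to continue in either direction, and keeping both copies reachable is what preserves every derivation of $G_1$.

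The only part demanding real care is the language equality $\maL(G')=\maL(G_1)$, which I would establish by induction on derivation length as a two-sided correspondence: for all $A\in V$ and $w\in T^*$, if $A_L\Rawe w$ or $A_R\Rawe w$ in $G'$ then $A\Rawe w$ in $G_1$, and conversely $A\Rawe w$ in $G_1$ forces $A_L\Rawe w$ or $A_R\Rawe w$ in $G'$ — the disjunct being selected by whether the first step of the $G_1$-derivation freezes terminals to the right (use $A_L$) or to the left (use $A_R$), with a terminal first step allowing either. The base case is a terminal production and the inductive step peels the first production and applies the hypothesis to $B$; taking $A=S$ and prefixing $S'\raw S_L\mid S_R$ then gives $\maL(G')=\{w : S\Rawe w \text{ in } G_1\}=\maL(G_1)=\maL(G)$. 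The main obstacle is precisely this induction: because a variable can be reached in either role, I must track which copy is active and check that the Stage 2 branching neither blocks a legitimate continuation nor introduces a spurious one.
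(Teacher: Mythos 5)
Your proposal is correct, and its Stage 1 is essentially the paper's: each interior production $A\raw uBv$ with $u,v$ nonempty is broken via a fresh variable $C$ into $A\raw uC$ and $C\raw Bv$. Where you genuinely diverge is Stage 2. The paper defers the proof of this lemma to \cite[Lemma 2.1]{Bed11}, but the same algorithm appears in its proof of the deterministic analogue (Lemma \ref{lem-DLNF}), and there the mixing problem is solved locally rather than globally: only the variables that are neither left nor right linear are touched, and for such an $A$ each left-shape production $A\raw Bv$ is rerouted through a fresh single-use variable, i.e.\ replaced by $A\raw C$ and $C\raw Bv$. Since the unit production $A\raw C$ has both shapes, $A$ becomes right linear and each fresh $C$ is left linear, and equivalence is immediate because $C$ has exactly one production and occurs on no other right side --- no induction is needed. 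Your splitting $A\mapsto A_L,A_R$ with branching into both copies of $B$ is more uniform and symmetric, but it pays for that with the two-sided induction you describe (which you state correctly: the invariant must be the disjunction, not the conjunction, since $A_R$ derives nothing when all of $A$'s productions have left shape, and the choice of disjunct is resolved only at the root by $S'\raw S_L\mid S_R$) and with roughly doubling the production set. One further advantage of the paper's local rewiring is that it preserves determinism, which is exactly why it can be reused verbatim for deterministic linear grammars in Lemma \ref{lem-DLNF}; your branching $A_L\raw B_Lz$ and $A_L\raw B_Rz$ introduces two productions differing only in the variable, so while your construction is perfectly sound for the present lemma, it would not transfer to that refinement.
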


%
%

\begin{Exa} \label{ex-LNF}
The \lnf of $G$ in Example \ref{ex-LG} obtained following the
algorithm given in the proof of Lemma \ref{lem-LNF} given in \cite[Lemma 2.1]{Bed11}  is the grammar
$G'=\prt{\{S,A\}, \{a,b\},S,P'}$, where $P'$ is given by

$\begin{array}{l}
S\raw aA \mid ab \mid abb \mid abbb \\
A\raw Sb \mid Sbb \mid Sbbb
\end{array}$



\end{Exa}

A linear grammar $G$ is  in  \textbf{strong linear normal form},
in short  \slnf, if it is in \lnf and the right side of each
production is of the form $aA$ or $Aa$ where $a\in
T\cup\{\lambda\}$ and $A\in V\cup\{\lambda\}$.

\begin{Pro} \label{pro-SLNF} \cite{Bed11}
Let $G$ be a linear grammar. Then there exists a linear grammar
$\widehat{G}$ in \slnf such that $\maL(G)=\maL(\widehat{G})$.
\end{Pro}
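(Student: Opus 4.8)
The plan is to start from the linear normal form guaranteed by Lemma~\ref{lem-LNF} and then break up each over-long production into a chain of productions of the permitted shape. First I would invoke Lemma~\ref{lem-LNF} to replace $G$ by an equivalent grammar $G'=\prt{V',T,S,P'}$ in \lnf, so that every variable of $G'$ is left linear or right linear. It then suffices to transform $G'$, production by production, into an equivalent grammar $\widehat{G}$ in \slnf while preserving the \lnf property.

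The core step is a terminal-peeling decomposition. Consider a right linear variable $A$ and a production $A\raw a_1a_2\cdots a_kB$ with $a_i\in T$, $k\geq 1$ and $B\in V\cup\{\lambda\}$ (the case $y\in T^*$ corresponding to $B=\lambda$). When $k\geq 2$ I would introduce fresh variables $A_1,\ldots,A_{k-1}$ and replace this single production by the chain
$$A\raw a_1A_1,\quad A_1\raw a_2A_2,\quad \ldots,\quad A_{k-1}\raw a_kB.$$
Each resulting production now has the form $aA$ with $a\in T$ and $A\in V\cup\{\lambda\}$, and every fresh variable occurs in the left side of exactly one such production, hence is right linear. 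Symmetrically, for a left linear variable $A$ and a production $A\raw Ba_1\cdots a_k$ I would peel terminals off the right end, introducing fresh left linear variables and producing a chain of productions of the form $Aa$. Productions whose right side is already a single terminal, a single variable, or $\lambda$ are left unchanged, since they already match $aA$ or $Aa$. Because each fresh variable belongs to exactly one chain and is classified left or right linear by construction, the grammar $\widehat{G}$ remains in \lnf, and by the shape of every production it is in \slnf.

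For language equivalence I would argue that each chain faithfully simulates the original production. A single application of $A\raw a_1\cdots a_kB$ in $G'$ corresponds to the $k$ successive steps $A\Ra a_1A_1\Ra a_1a_2A_2\Ra\cdots\Ra a_1\cdots a_kB$ in $\widehat{G}$, and conversely any derivation in $\widehat{G}$ that enters a fresh variable $A_i$ is forced to continue along its unique chain until reaching $B$, thereby reconstructing the original one-step rewrite. A straightforward induction on the length of derivations then gives $\maL(G')=\maL(\widehat{G})$, which combined with $\maL(G)=\maL(G')$ from Lemma~\ref{lem-LNF} yields the claim.

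The main obstacle I anticipate is not the construction but the careful handling of the boundary cases --- empty right sides, right sides consisting of a single symbol, and the two dual orientations (left versus right linear) --- together with the verification that the freshly introduced variables never interact across distinct chains, which is precisely what keeps $\widehat{G}$ inside \lnf. Once the uniqueness of each chain's continuation is established, the derivation-length induction is routine.
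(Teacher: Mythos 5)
Your proof is correct and takes essentially the same route as the paper's (the algorithm of \cite{Bed11}, mirrored in this paper's proof of Proposition \ref{pro-DSLNF} and illustrated in Example \ref{ex-SLNF}): first reduce to \lnf via Lemma \ref{lem-LNF}, then peel terminals off over-long right sides through chains of fresh left- or right-linear variables. The only difference is bookkeeping --- the paper's iterative algorithm shares one new variable among all productions of a variable $A$ with the same leading (resp.\ trailing) terminal, whereas you introduce a separate fresh chain per production --- which affects the size of $\widehat{G}$ but not the correctness or the structure of the argument.
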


%
%

\begin{Exa} \label{ex-SLNF}
The \slnf of $G'$ in Example \ref{ex-LNF} obtained following the
algorithm given in the proof of Proposition \ref{pro-SLNF}  given in \cite[Prop. 2.1.]{Bed11} is the grammar
$\widehat{G}=\prt{\{S,A,B,C,D,E,F\}, \{a,b\},S,\widehat{P}}$ where
$\widehat{P}$ is given by

$\begin{array}{l}
S\raw aA \mid aD \\
A\raw Sb \mid Bb \\
B\raw Sb \mid Cb \\
C\raw Sb \\
D \raw b \mid bE \\
E\raw b \mid bF \\
F\raw b \\
\end{array}$


\end{Exa}

\subsection{Deterministic Linear Grammars}

A grammar $G=\prt{V,T,S,P}$ is a {\bf deterministic linear grammar} if each
production in $P$ has the form $A\rightarrow aBu$ or $A\rightarrow
\lambda$ and for every $a\in T$, $A,B,C\in V$ and $u,v\in T^*$, if $A\rightarrow aBu$, $A\rightarrow aCv\in P$ then
$B=C$ and $u=v$. 

\begin{Lem}\label{lem-DLNF}
Let $G$ be a deterministic linear grammar. Then there exists a deterministic linear grammar
$G'$ in the \lnf such that $\maL(G)=\maL(G')$.
\end{Lem}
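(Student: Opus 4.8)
The plan is to adapt the \lnf construction of Lemma~\ref{lem-LNF} to the deterministic setting; the only genuinely new task is to check that the determinism condition survives the transformation. Recall that every production of a deterministic linear grammar $G=\prt{V,T,S,P}$ has the form $A\raw aBu$ with $a\in T$, $B\in V$, $u\in T^*$, or else $A\raw\lambda$. The sole obstruction to $G$ being already in \lnf is a production $A\raw aBu$ with $u\neq\lambda$, since then $B$ sits strictly between a leading terminal and a nonempty block of trailing terminals, so $A$ is neither left nor right linear. The idea is to peel the leading terminal off into a right-linear production and leave the trailing block in a left-linear production.

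First I would build $G'=\prt{V',T,S,P'}$ as follows. Keep every production $A\raw\lambda$ and every production $A\raw aB$ (the case $u=\lambda$) unchanged. For each production $A\raw aBu$ with $u\neq\lambda$, introduce a fresh variable $C_{A,a}$ and replace the production by the pair $A\raw aC_{A,a}$ and $C_{A,a}\raw Bu$; let $V'$ consist of $V$ together with all the new variables. Each original variable then carries only productions of the form $aB$, $aC_{A,a}$, or $\lambda$, all of which lie in $T^*$ or have the shape $zB'$ with $z\in T^*$; hence every original variable is right linear. Each fresh variable $C_{A,a}$ carries the single production $C_{A,a}\raw Bu$, which has the shape $B'z$ with $z\in T^*$ and is therefore left linear. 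Thus $G'$ is in \lnf.

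Next I would verify that $G'$ remains deterministic. Because $G$ is deterministic, for every pair $(A,a)$ there is at most one production of $G$ whose left side is $A$ and whose right side begins with $a$; consequently $A$ receives in $G'$ at most one production beginning with $a$ (either the kept $A\raw aB$ or the new $A\raw aC_{A,a}$, and never both, since that would already violate determinism of $G$). Each fresh variable $C_{A,a}$ has exactly one production and is therefore trivially deterministic. This is the step where the hypothesis on $G$ is genuinely used, and it is the main point to get right: the uniqueness of the original $(A,a)$-production is exactly what guarantees both that no two distinct right-linear productions of an original variable clash on their leading symbol and that a single auxiliary variable per peeled production suffices, so the split introduces no new nondeterminism.

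Finally, for $\maL(G')=\maL(G)$ I would set up a step-by-step simulation: every application of a production $A\raw aBu$ with $u\neq\lambda$ in a derivation of $G$ is matched by the two steps $A\Ra aC_{A,a}\Ra aBu$ in $G'$, while the steps using $A\raw aB$ and $A\raw\lambda$ are copied verbatim, and the auxiliary variables occur only in this one controlled context. An induction on the length of derivations then yields $S\Rawe w$ in $G$ if and only if $S\Rawe w$ in $G'$ for every $w\in T^*$, giving the claimed equality of the generated languages and completing the proof.
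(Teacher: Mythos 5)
Your construction is essentially the paper's own proof: the paper likewise replaces each production $A\raw aBv$ with $v\in T^+$ by the pair $A\raw aC$ and $C\raw Bv$ for a fresh variable $C$, and your explicit verification of determinism and language equivalence simply fills in what the paper dismisses as clear. The only difference is that the paper includes a second pass for variables that are neither left nor right linear, which is vacuous in the deterministic setting (after the first pass every original variable is right linear and every fresh variable is left linear), so your omission of it is correct.
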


\begin{proof} For each $A\in V$ and $a\in T$ let $A_a=\{A\raw aBv\in P
\; :$ for some $B\in V$ and $v\in T^+\}$. If $A_a\neq\emptyset$
then substitute in $P$ each production $A\raw aBv$  by the
productions $A\raw aC$ and $C\raw Bv$, where $C$ is a new variable
and  add $C$ to $V$. At this moment each production has the form

$$A\raw aB, A\raw Bu\mbox{ or }A\raw u$$

\noindent for some $a\in T$, $u\in T^*$ and $A,B\in V$. If $A\in V$ is
neither right linear nor left linear variable, then substitute
each production $A\raw Bv \in P$  by the productions  $A\raw C$
and $C\raw Bv$, where $C$ is a new variable and add $C$ to $V$.
The resulting grammar clearly is deterministic, is in \lnf and is equivalent with
the original.
\end{proof}

 \begin{Pro} \label{pro-DSLNF} 
Let $G$ be a deterministic linear grammar. Then there exists a  deterministic linear grammar
$\widehat{G}$ in \slnf such that $\maL(G)=\maL(\widehat{G})$.
\end{Pro}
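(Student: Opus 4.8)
The plan is to follow the two-stage strategy of the proof of Proposition~\ref{pro-SLNF}, but to keep the bookkeeping tight enough that determinism survives. First I would use Lemma~\ref{lem-DLNF} to replace $G$ by an equivalent deterministic linear grammar $G'=\prt{V',T,S,P'}$ in \lnf. Inspecting the construction of that lemma, every production of $G'$ has one of the shapes $A\raw aB$, $A\raw Bu$ or $A\raw\lambda$ with $a\in T$, $u\in T^+$ and $A,B\in V'$; the terminal-first productions, which are precisely the ones carrying the branching of the grammar, already have the single-terminal \slnf shape $A\raw aB$, and --- crucially --- each variable owning a variable-first production $A\raw Bu$ owns only that single production, since these variables are exactly the fresh symbols introduced in Lemma~\ref{lem-DLNF}, one per split production.

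Hence the only non-\slnf productions are the variable-first ones with $|u|\ge 2$, and the remaining task is to shorten them. Writing $u=a_1a_2\cdots a_k$, I would replace the single production $A\raw Ba_1\cdots a_k$ by the chain
\[
A\raw D_{k-1}a_k,\quad D_{k-1}\raw D_{k-2}a_{k-1},\quad\ldots,\quad D_2\raw D_1a_2,\quad D_1\raw Ba_1,
\]
where $D_1,\dots,D_{k-1}$ are pairwise distinct fresh variables, a separate batch being used for each production that is split. Now every new production has the \slnf shape $Xa$ (a variable followed by one terminal), every retained production has shape $aB$ or $\lambda$, and a straightforward induction on $k$ shows that the chain simulates the single production $A\raw Ba_1\cdots a_k$ and nothing more (so that $A\Rawe Ba_1\cdots a_k$ in the new grammar exactly when the production was present before). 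Therefore the generated language is unchanged, and the resulting grammar $\widehat{G}$ is in \slnf with $\maL(\widehat{G})=\maL(G')=\maL(G)$.

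The step needing the most care is checking that $\widehat{G}$ remains deterministic, i.e. that the shortening created no two productions competing at a single decision point. I would argue this from two observations. The terminal-first productions $A\raw aB$ of $G'$ are carried over verbatim, so for each variable and each leading terminal the available production stays unique exactly as in $G'$. And every variable touched by the shortening --- each split variable $A$ as well as each fresh $D_i$ --- ends up with a single production, so no new choice is ever introduced on the variable-first side. Thus the determinism that $G$ passes to $G'$ through Lemma~\ref{lem-DLNF} is inherited unchanged by $\widehat{G}$, which is the desired deterministic linear grammar in \slnf.
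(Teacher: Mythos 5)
Your proof is correct and follows essentially the same route as the paper: reduce to \lnf via Lemma~\ref{lem-DLNF}, then eliminate the remaining long right-hand sides by introducing fresh variables that peel off one terminal at a time --- your explicit chain $A\raw D_{k-1}a_k,\; D_{k-1}\raw D_{k-2}a_{k-1},\ldots, D_1\raw Ba_1$ is exactly the unrolled form of the paper's while-loop on the sets $A^a$, the terminal-first case being vacuous for deterministic input just as you observe. If anything you are more careful than the paper, which simply asserts that the resulting grammar is deterministic, whereas you check explicitly that no competing productions are created.
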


 \begin{proof}
Let $G$ be a deterministic linear grammar. By the Lemma \ref{lem-DLNF}, we can suppose without loss of generality that 
 $G$ is in the \lnf. Now, apply the next algorithm:
While exists $A\in V$ and $a\in T$ such that $A_a=\{A\raw ay\in P\; :\; {\mid} y{\mid}\geq 2
\mbox{ or }y\in T\}\neq \emptyset$,
 change each $A\raw ay\in A_a$ by the productions $A\raw aB$ and
 $B\raw y$, where $B$ is new and add $B$ to $V$. Lately, in
 analogous way, while there are $A\in V$ and $a\in T$ such that
 $A^a=\{A\raw ya\in P\; :\; {\mid} y{\mid}\geq 2\mbox{ or }y\in
 T\}\neq \emptyset$, change each $A\raw ya\in A^a$ by the
 productions $A\raw Ba$ and $B\raw y$, where $B$ is new  and add
 $B$ to $V$.
 The result is a deterministic linear grammar in \slnf equivalent with $G$.
\end{proof}

\begin{Exa}
 Let $G=\prt{\{S,A\}, \{a,b\},S,P}$ the determinitic linear grammar  where
$P$ is given by

$\begin{array}{l}
S\raw bbS \mid aAb \\
A\raw aaAbb \mid \lambda \\
\end{array}$
In this case $\maL(G)=\{b^mab: m$ is even and $n$ is odd$\}$. Their \lnf following the alghorithm in the Lemma 
\ref{lem-DLNF} is the deterministic linear grammar $G'=\prt{\{S,A,B,C,D,E\}, \{a,b\},S,P'}$  where
$P'$ is given by

$\begin{array}{l}
S\raw bB \mid aC \\
A\raw aD \mid \lambda \\
B\raw bS \\
C\raw Ab  \\
D\raw aE \\
E\raw Abb \\
\end{array}$

Now applying the algorithm in Proposition \ref{pro-DSLNF} we obtain the  deterministic linear grammar 
$\widehat{G}=\prt{\{S,A,B,C,D,E,F\}, \{a,b\},S,\widehat{P}}$ where $\widehat{P}$ is $P'$ by substitute the 
production $E\raw Abb$ for $E\raw Fb$ and $F\raw Ab$. Thus, clearly, $\widehat{G}$ is in \slnf.
\end{Exa}

\section{$\lambda$-Nondeterministic linear
automata}\label{sec-NLA}

A $\lambda$-{\bf nondeterministic linear automata}, $\lambda$-\nla
in short, consist of two disjoint finite sets of states ($Q_L$ and
$Q_R$) some of which will be considered as accepting states, an
input tape which is divided into cells and it is not limited at the
right, each cell can hold a symbol from a finite input alphabet,
two read heads and a control unit which  manages the behavior of
the $\lambda$-\nla in accordance with the current configuration. The
execution of a $\lambda$-\nla starts with
 a string in the input tape, with the left read head pointing to
 leftmost symbol, the right read head pointing to the rightmost symbol and
 the current state being a state of a special set of states of $Q_L\cup Q_R$ called set of start 
 states\footnote{The use of a set of start states, although not being usual, 
 had been used in several models of automata. For example, \cite[Def.4.1]{CL89} and 
 \cite[page 32]{KMT97} in non-deterministic finite automata and 
 \cite[page 89]{Coh91}, \cite[page 52]{Har78} in transition systems.}.
A computation step in a $\lambda$-\nla is made as follows: the
control unit, depending on the class which is belonged the current state, uses the left or the right read head to scan
a symbol from the tape, moves the left read head one cell to the right
if the current state is in $Q_L$ or moves the right read head one cell
to the left if the current state is in $Q_R$, and by making a
nondeterministic choice, it changes the state choosing it from a set
of possible states. The control unit of a $\lambda$-\nla also 
allows changing of state without moving the read heads, i.e. without reading
an input symbol. The computation halts when a read head passes over the
other read head or when there is no choice of possible actions. A string 
is only accepted when  one read head passes over the other read head and the current state is an accepting state.

Figure \ref{fig-gen-NLA} illustrates a schematic
representation of a $\lambda$-\nla.


\begin{figure}[h]
\begin{center}
\includegraphics[height=5cm,width=10cm]{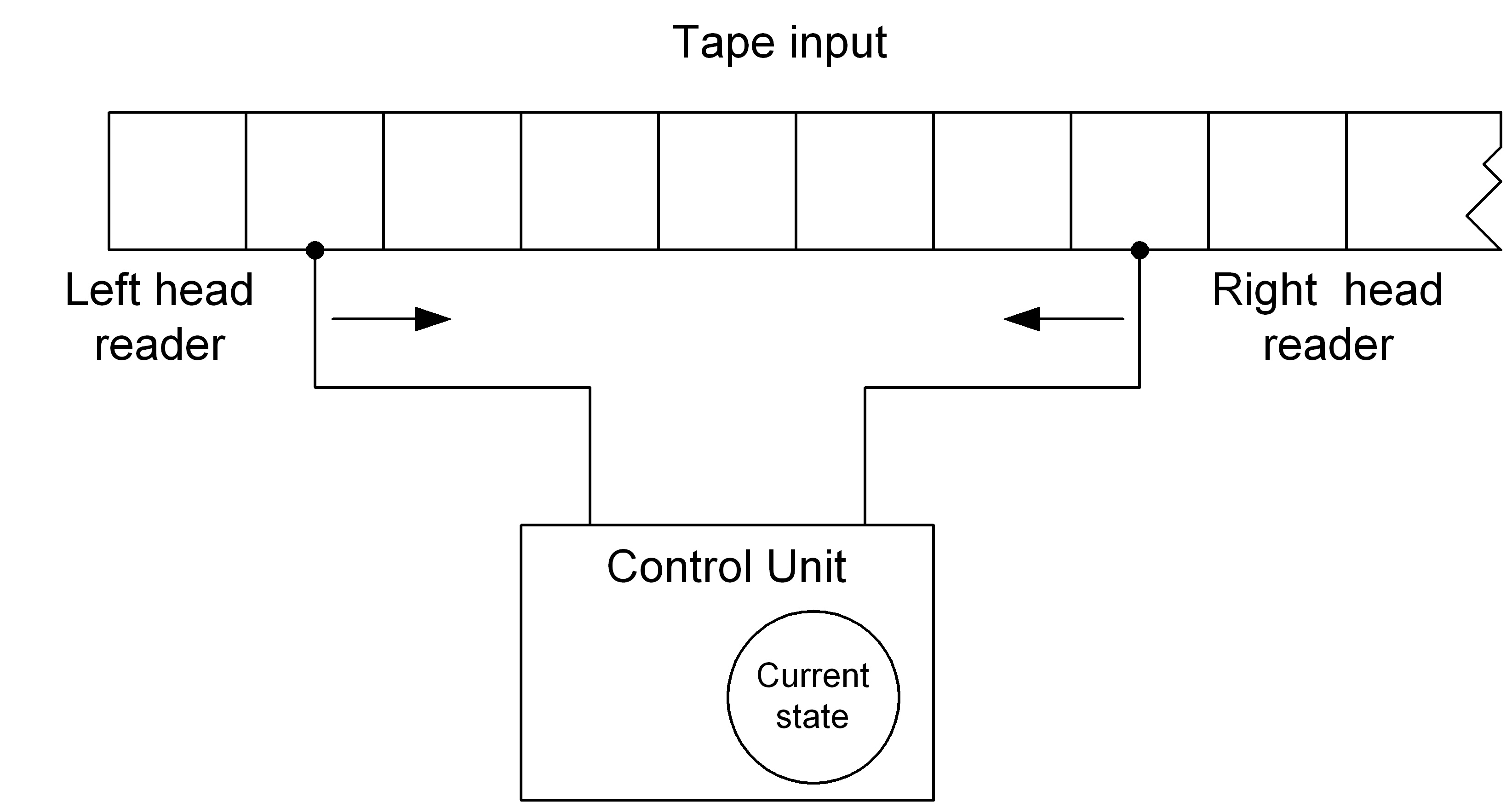}
\caption{Schematic representation of a $\lambda$-nondeterministic
linear automata.}
 \label{fig-gen-NLA}
 \end{center}
\end{figure}

 Formally, a $\lambda$-\nla is a
sextuple $M=\prt{Q_L,Q_R,\Sigma,\delta,I,F}$ where $Q_L$ and
$Q_R$ are disjoint and finite sets of states, $\Sigma$ is a finite
set of input symbols (the alphabet), $I\subseteq Q_L\cup Q_R$ is the set of  start
states, $F\subseteq Q_L\cup Q_R$ is the set of final or accepting
states and $\delta: (Q_L\cup Q_R)\times
(\Sigma\cup\{\lambda\})\lra \mathcal{P}(Q_L\cup Q_R)$.

Analogously to finite automata, each $\lambda$-\nla has associated to itself
a directed graph, called transition diagram. In order to
distinguish the states in $Q_L$ from the states in $Q_R$ we use circles and squares to represent them.

Notice that, if $M$ is a $\lambda$-\nla with $Q_R=\emptyset$ (or $Q_L=\empty$), then
$\maL(M)$ is a regular language. In fact, in this case,
$M'=\prt{Q_L,\Sigma,I,\delta,F}$ is a $\lambda$-nondeterministic
finite automaton (with a set of start states) such that $\maL(M')=\maL(M)$ and whose
transition diagram is exactly the same as the transition diagram
for $M$. Thus, $\lambda$-\nla is a natural extension of
$\lambda$-nondeterministic finite automata.

\begin{Exa} \label{ex-NLA}
Figure \ref{fig-NLA-ex2} illustrates the $\lambda$-\nla
$M=\prt{Q_L,Q_R,\Sigma,\delta,\{q_0\},F}$ where

\begin{itemize}
\item $Q_L=\{q_0,q_1,q_2,q_3\}$

\item $Q_R=\{p_1,p_2,p_3,p_4\}$

\item $\Sigma=\{a,b\}$

\item $F=\{p_1,q_2\}$

\item $\delta(q_0,a)=\{q_0,p_1\}$, $\delta(q_0,\lambda)=\{p_3\}$, $\delta(p_1,a)=\{p_2\}$,
$\delta(p_2,a)=\{q_1\}$, $\delta(q_1,b)=\{p_1\}$,
$\delta(p_3,b)=\{p_3,q_2\}$, $\delta(q_2,b)=\{q_3\}$,
$\delta(q_3),b)=\{p_4\}$, $\delta(p_4,a)=\{q_2\}$ and empty for the
remain (i.e. $\delta(q_0,b)=\emptyset$, $\delta(p_1,b)=\emptyset$,
etc.).
\end{itemize}
\end{Exa}



%

\begin{figure}
\centering
\begin{tikzpicture}[shorten >=1pt,node distance=2cm,auto]

\node[state,initial]  (q_0)                          {$q_0$};
\node[transition, regular polygon sides=\4, minimum size=0.9cm]          (p_1) [right of=q_0]     { $\;\;P_1\;\;$};
\node[transition, regular polygon sides=\4, minimum size=0.8cm]          (p1) [right of=q_0]     { };
\node[transition, regular polygon sides=\4, minimum size=0.9cm]          (p_2) [right of=p_1]     { $\;\;P_2\;\;$};
\node[state]  (q_1)       [right of=p_2]                   {$q_1$};
\node[transition, regular polygon sides=\4, minimum size=0.9cm]          (p_3) [below of=q_0]     {$\;\;P_3\;\;$ };
\node[state,accepting]  (q_2)       [right of=p_3]                   {$q_2$};
\node[state]  (q_3)       [right of=q_2]                   {$q_3$};
\node[transition, regular polygon sides=\4, minimum size=0.9cm]          (p_4) [right of=q_3]     { $\;\;P_4\;\;$};


\path[->] (q_0) edge                 node               {\small a}     (p_1)
          (q_0) edge [loop above]    node               {\small a}     (q_0)
          (q_0) edge                 node               {\small $\lambda$}     (p_3)
          (p_1) edge                 node               {\small a}     (p_2)
          (p_2) edge                 node               {\small a}     (q_1)
          (q_1) edge  [bend right]   node [above]       {\small b}     (p_1)
          (p_3) edge  [loop below]    node [below]       {\small b}     (p_3)
          (p_3) edge                node               {\small b}     (q_2)
          (q_2) edge                node               {\small b}     (q_3)
          (q_3) edge                 node               {\small b}     (p_4)
          (p_4) edge    [bend left] node  [below]     {\small a}     (q_2);
\end{tikzpicture}
\caption{Transition diagram of a $\lambda$-nondeterministic linear automata.}\label{fig-NLA-ex2}
\end{figure}
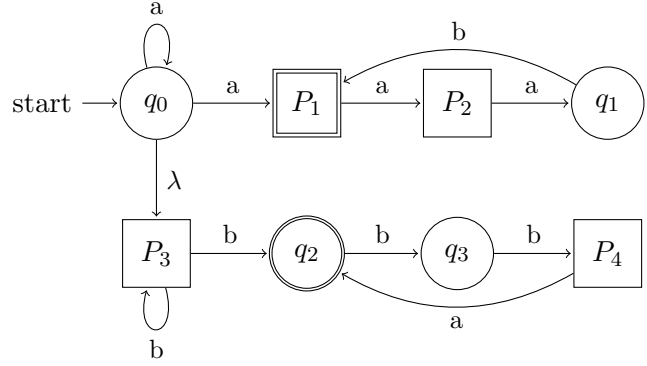

An \textbf{instantaneous description} (ID) of a $\lambda$-\nla
must record the current state, the string remaining to read and which read 
head is active. Thus an ID is a pair $(q,w)$ in $(Q_L\cup
Q_R)\times \Sigma^*$  meaning that it remains to read the string $w$,
the current state is $q$ and the read head which is active is the left
when $q\in Q_L$ and is the right when $q\in Q_R$.

The symbol $\vdash_M$ denotes a \textbf{valid move}  from an ID to
another ID in a $\lambda$-\nla $M$. When the subscript $M$ is clear we omit it.
Thus, for each $q\in Q_L$, $q\prime\in Q_L\cup Q_R$,
$p\in Q_R$, $w\in \Sigma^*$ and $a\in\Sigma\cup\{\lambda\}$

$(q,aw)\vdash (q\prime,w)$ is possible if and only if
$q\prime\in\delta(q,a)$ and

$(p,wa)\vdash (q\prime,w)$ is possible if and only
$q\prime\in\delta(p,a)$

We use $\vdashe$ for the reflexive and transitive closure of
$\vdash$, i.e. $\vdashe$ represents moves involving an arbitrary
number of steps. Thus, in Example \ref{ex-NLA}, we
have that $(q_0,abbaaaa)\vdashe (p_1,baa)$, because

$$(q_0,abbaaaa)\vdash (p_1,bbaaaa,)\vdash (p_2,bbaaa)\vdash
(q_1,bbaa)\vdash (p_1,baa)$$

The language accepted by a $\lambda$-\nla $M$ is the set

$$\maL(M)=\{w\in\Sigma^*\; :\; (q_0,w)\vdashe (q_f,\lambda)\mbox{ for some }q_0\in I\mbox{ and }q_f\in
F\}$$

For the case of $\lambda$-\nla $M$ in Example \ref{ex-NLA},

$$\begin{array}{ll}
\maL(M) = & \{a^mb^na^{2n}\; :\; m\geq 1\mbox{ and }n\geq 0\}\cup \\
& \{ a^kb^{2m}a^mb^n \; :\; k,m\geq 0\mbox{ and } n\geq 1\}. \\
\end{array}$$

Notice that, the halting mechanism of the \nla,  despite not being 
 explicit in its mathematical formulation, can be formalized
by the ID notion as follows: when an ID $(q,\lambda)$ is  achieved  
through a move (e.g. $(q',a)\vdash (q,\lambda)$), we are in the
situation of ``a read head passes over the other read head''.

\subsection{$\lambda$-\nla and linear languages}

First, we will prove that each language accepted by some
$\lambda$-\nla $M$ is linear, i.e.

\begin{The} \label{teo-NLA-LG} \cite{Bed11}
Let $M=\prt{Q_L,Q_R,\Sigma,\delta,I,F}$ be a $\lambda$-\nla.
Then there exists a linear grammar $G$ such that $\maL(M)=\maL(G)$.
\end{The}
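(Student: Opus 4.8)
The plan is to construct a linear grammar $G$ directly from the transition structure of $M$, turning each state into a variable and each transition into a production, so that derivations in $G$ mirror computations in $M$. For each state $q\in Q_L\cup Q_R$ I introduce a variable $A_q$, and I take a fresh start variable $S$ together with productions $S\raw A_{q_0}$ for each start state $q_0\in I$. The key idea is that the two read heads of $M$ scan the input from opposite ends, and this is exactly what a linear grammar does: a left-linear step consumes a leftmost terminal and a right-linear step consumes a rightmost terminal. Accordingly, for a state $q\in Q_L$ I translate each $q'\in\delta(q,a)$ with $a\in\Sigma$ into the production $A_q\raw a A_{q'}$ (terminal prepended on the left), and each $\lambda$-move $q'\in\delta(q,\lambda)$ into $A_q\raw A_{q'}$. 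Symmetrically, for $p\in Q_R$ I translate $q'\in\delta(p,a)$ into $A_p\raw A_{q'} a$ (terminal appended on the right) and $\lambda$-moves into $A_p\raw A_{q'}$.

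Next I must handle acceptance. A string is accepted when one head passes the other in a final state, i.e.\ when an ID $(q_f,\lambda)$ is reached with $q_f\in F$. To encode this termination, for each accepting state $q_f\in F$ I add the production $A_{q_f}\raw\lambda$, which closes off a derivation exactly when the remaining string has been fully consumed from both ends. One then checks that the resulting grammar is linear: every production has a single variable on the left and at most one variable on the right (in either the leftmost or rightmost position, or none), which is precisely the linear form with no positional restriction required by the definition in Section~\ref{sec-LG}.

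The core of the argument is the correctness equivalence $\maL(M)=\maL(G)$, which I would prove by establishing a precise correspondence between valid moves and single derivation steps. The right invariant to maintain is: for any $w\in\Sigma^*$ and any state $q$, one has $(q,w)\vdashe (q_f,\lambda)$ for some $q_f\in F$ if and only if $A_q\Rawe w$. The forward direction follows by induction on the number of moves in the computation, reading off the matching production at each step (a left move contributes a leftmost terminal, a right move a rightmost terminal, a $\lambda$-move a variable-only production, and the final halting ID uses $A_{q_f}\raw\lambda$). The backward direction is the symmetric induction on the length of the derivation, reconstructing the computation move by move. Combining this with the start productions $S\raw A_{q_0}$ yields $w\in\maL(G)$ iff $(q_0,w)\vdashe(q_f,\lambda)$ for some $q_0\in I$, $q_f\in F$, which is exactly $w\in\maL(M)$.

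The step I expect to be the main obstacle is aligning the two inductions so that the terminal-prepending (left) and terminal-appending (right) productions interleave correctly to build up $w$ from both ends; in particular, one must verify that partial sentential forms always have the shape $u A_q v$ with $u$ the prefix already consumed by the left head and $v$ the suffix already consumed by the right head, so that the single remaining variable occupies the gap corresponding to the unread portion of the tape. Care is also needed with the $\lambda$-moves, which change state without consuming input and correspond to unit productions $A_q\raw A_{q'}$ that do not advance either end; one must confirm these do not disturb the prefix/suffix invariant. Once this sentential-form invariant is stated cleanly, both directions of the induction become routine.
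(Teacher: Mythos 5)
Your construction is correct and takes essentially the same route as the paper's (given in \cite{Bed11} and visible here through the even-\nla proposition, which describes the output of that algorithm as productions $q\raw ap$ for left moves, $p\raw qa$ for right moves, and $q\raw \lambda$ for final states): states become variables, $Q_L$-transitions prepend terminals, $Q_R$-transitions append them, $\lambda$-moves give unit productions, and your state-by-state invariant $(q,w)\vdashe (q_f,\lambda)$ iff $A_q\Rawe w$ is the standard induction establishing its correctness. The only cosmetic difference is your fresh start variable $S$ with $S\raw A_{q_0}$ for each $q_0\in I$, a routine device for the paper's set of start states.
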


%
%

%
%
%
%
%
%
%
%
%
%

Conversely, each language
generated by a linear grammar $G$ is accepted by some
$\lambda$-\nla.

\begin{The} \label{theo-SLNF-NLA} \cite{Bed11}
Let $G=\prt{V,T,S,P}$ be a linear grammar. Then there exists a
$\lambda$-\nla $M$ such that $\maL(M)=\maL(G)$.
\end{The}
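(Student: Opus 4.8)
The plan is to construct a $\lambda$-\nla $M$ directly from the grammar $G$, exploiting the structural correspondence between productions of a linear grammar and the two kinds of moves (left-head and right-head) available in a $\lambda$-\nla. First I would invoke Proposition \ref{pro-SLNF} to replace $G$ by an equivalent grammar $\widehat{G}=\prt{V,T,\widehat{S},\widehat{P}}$ in \slnf, so that every production has one of the forms $A\raw aB$, $A\raw Ba$, $A\raw a$, $A\raw B$, $A\raw \lambda$, with $a\in T$ and $A,B\in V$. Working from the \slnf is the natural simplification, since a production $A\raw aB$ consumes a single terminal on the left and a production $A\raw Ba$ consumes a single terminal on the right, which is exactly the granularity at which the two read heads of a $\lambda$-\nla operate.

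The core of the construction is to let the variables of $\widehat{G}$ serve as the states of $M$, sorting each variable into $Q_L$ or $Q_R$ according to whether its productions consume symbols on the left or on the right, and then translating each production into a transition. Concretely, for a variable $A$ whose productions are of the left type ($A\raw aB$ or $A\raw a$) I would place $A$ in $Q_L$ and set $B\in\delta(A,a)$ for each such production; for a variable whose productions are of the right type ($A\raw Ba$) I would place $A$ in $Q_R$ and again set $B\in\delta(A,a)$, now with the understanding that the right head is the active one. Chain productions $A\raw B$ become $\lambda$-moves $B\in\delta(A,\lambda)$, and terminating productions $A\raw a$ or $A\raw\lambda$ are handled by routing into an accepting state that is reached precisely when the remaining input has been exhausted (that is, when one head crosses the other). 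I would take $I=\{\widehat{S}\}$ as the start state and designate the accepting states so that an ID $(q,\lambda)$ signals a halting configuration, matching the acceptance condition $\maL(M)=\{w:(q_0,w)\vdashe(q_f,\lambda)\}$ described above.

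The verification then splits into the two inclusions $\maL(\widehat{G})\subseteq\maL(M)$ and $\maL(M)\subseteq\maL(\widehat{G})$, each proved by induction establishing a tight correspondence between derivations and computations. The key invariant I would maintain is that a sentential form $uAv$ reached by a leftmost-outermost derivation of $\widehat{G}$ corresponds to the ID $(A,v)$ of $M$ in which the prefix $u$ has already been matched by left-head moves and the suffix completing $v$ has already been matched by right-head moves, so that the string still to be read is exactly the terminal block still to be generated around $A$. One shows $S\Rawe uAv$ iff $(S,w)\vdashe(A,\text{middle part of }w)$, and takes the derivation to its conclusion when $A$ is erased by a terminating production at the moment the two heads meet.

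The main obstacle I anticipate is the bookkeeping of the acceptance/halting condition, since in a $\lambda$-\nla acceptance requires that \emph{one head pass over the other} exactly when the automaton is in a final state, whereas a derivation terminates by applying a production that removes the last variable. Getting these to coincide requires care with the terminating productions $A\raw a$ and $A\raw\lambda$: I must ensure that the transition modeling such a production lands in an accepting ID $(q_f,\lambda)$ precisely when no input remains, and that no spurious acceptance occurs when input is left over or the heads have not yet crossed. Handling the boundary case $w=\lambda$ (where the start variable itself may be accepting or erasable) and ruling out the degenerate computations in which a head overruns before the string is consumed are the delicate points; once the invariant above is stated with enough precision to pin down head positions, the two inductions themselves are routine.
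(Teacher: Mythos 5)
Your proposal follows essentially the same route as the paper's proof (given in \cite{Bed11}): normalize the grammar to \slnf via Proposition \ref{pro-SLNF}, take the variables as states sorted into $Q_L$ and $Q_R$ according to right/left linearity, translate productions $A\raw aB$ and $A\raw Ba$ into left-head and right-head transitions, chain productions into $\lambda$-moves, and terminating productions into the acceptance mechanism --- exactly the construction whose dual appears in Theorem \ref{teo-NLA-LG} and which is reused in the proof of Theorem \ref{theo-DL-DLA}. The invariant you state ($S\Rawe uAv$ iff $(S,uxv)\vdashe (A,x)$) is the same derivation/computation correspondence underlying the paper's argument, so your proof is correct and essentially identical in approach.
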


%
%
%
%
%
%
%

The algorithms in the proofs (which can be found in \cite{Bed11}) of Theorems \ref{teo-NLA-LG} and
\ref{theo-SLNF-NLA} are dual, in the sense that applying one and then the
next, we obtain the same object. Obviously, for that in
case of Theorem \ref{theo-SLNF-NLA}, the grammar must be in the
\slnf.

\subsection{$\lambda$-moves are not necessary}

A $\lambda$-\nla without  $\lambda$-transitions is called {\bf
nondeterministic linear automaton}, in short {\bf \nla}.

 The
$\lambda$-moves in the usual classes of automata are not essential
for the model. Thus, for example, $\lambda$-moves do not increase
computational power acceptance of finite automata (see, for
example, \cite{HU79}), of pushdown automata (see, for example,
\cite{Har78}) and of Turing machines (by Church-thesis). So it is
reasonable to hope that  \nla and $\lambda$-\nla have the same
acceptance power. Nevertheless, when a $\lambda$-transition in a
$\lambda$-\nla  happens between two states of different type, it is
not obvious how we can eliminate it without changing the language
accepted by the automaton.

\begin{The} \label{theo-lambda-nla-nla} \cite{Bed11}
Let $M=\prt{Q_L,Q_R,\Sigma,\delta,I,F}$ be a $\lambda$-\nla.
Then there exists a \nla $M'$ such that $\maL(M)=\maL(M')$.
\end{The}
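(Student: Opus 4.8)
The plan is to eliminate $\lambda$-transitions from a $\lambda$-\nla $M=\prt{Q_L,Q_R,\Sigma,\delta,I,F}$ by the standard $\lambda$-closure technique, adapted carefully to the two-headed setting. First I would define, for each state $q$, its $\lambda$-closure $E(q)$ as the set of states reachable from $q$ using only $\lambda$-moves; this is computed exactly as for $\lambda$-\nfa by iterating $\delta(\cdot,\lambda)$ to a fixpoint. The delicate point, flagged already in the remark preceding the statement, is that a $\lambda$-move may cross between $Q_L$ and $Q_R$: a state $q\in Q_L$ may have $E(q)$ containing states of $Q_R$, so the closure mixes the two head types. This is precisely why the naive adaptation is not obvious, and it will be the main obstacle.

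The key idea I would pursue is to split the behavior according to which head actually performs the next real (non-$\lambda$) read. For a state $q$ and input symbol $a\in\Sigma$, I would set the new transition $\delta'(q,a)$ to be the set of all states reachable by first taking any sequence of $\lambda$-moves, then one genuine $a$-move; formally $\delta'(q,a)=\bigcup_{r\in E(q)}\,E\bigl(\delta(r,a)\bigr)$, where $E$ is extended to sets in the obvious way. The subtlety is that the single real $a$-move from $r$ consumes the symbol using $r$'s head (left if $r\in Q_L$, right if $r\in Q_R$), so I must track head type through $r$ rather than through $q$. I would verify that this is consistent with the ID semantics: a chain $(q,a w)\vdashe (q',w)$ in $M$ that reads exactly one terminal symbol, with all other steps being $\lambda$-moves, is captured by a single step $(q,aw)\vdash'(q'',w)$ in $M'$ for an appropriate $q''\in E(q')$. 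The start and final sets are adjusted in the familiar way: I would enlarge the start set to $I'=\bigcup_{q_0\in I}E(q_0)$ (or equivalently keep $I$ but ensure acceptance accounts for $\lambda$-closure), and a state $q$ becomes accepting in $M'$ if $E(q)\cap F\neq\emptyset$, so that strings formerly accepted via trailing $\lambda$-moves are still accepted.

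With $M'=\prt{Q_L,Q_R,\Sigma,\delta',I',F'}$ defined, I would prove $\maL(M)=\maL(M')$ by a two-way induction on the number of moves. For the inclusion $\maL(M')\subseteq\maL(M)$, each $\delta'$-step unfolds into a $\lambda$-move chain plus one real move in $M$, so any accepting computation in $M'$ lifts to one in $M$. For the reverse inclusion, I would induct on the length of an accepting computation in $M$, grouping each maximal block of $\lambda$-moves together with the following real read into a single $M'$-step, and handling a terminal block of $\lambda$-moves via the modified final set $F'$. The halting/acceptance condition ``one head passes the other'' is recorded by reaching an ID of the form $(q,\lambda)$, so I must check that the grouping respects this: a terminal $\lambda$-chain from an accepting-after-closure state still produces acceptance under $F'$. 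The main work, and the only genuinely \nla-specific difficulty, is ensuring the head-type bookkeeping in $\delta'$ is coherent when $\lambda$-moves switch head types; once the definition $\delta'(q,a)=\bigcup_{r\in E(q)}E(\delta(r,a))$ is seen to respect the ID transition rules, the equivalence proof is routine.
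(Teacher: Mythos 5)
Your proposal founders at exactly the point you flag as ``the main work,'' and the difficulty is not resolved by your definition. In an \nla, which head a move uses is determined by the type ($Q_L$ or $Q_R$) of the \emph{source state of that move}: $(q,aw)\vdash(q',w)$ consumes the leftmost symbol when $q\in Q_L$, the rightmost when $q\in Q_R$. Your transition $\delta'(q,a)=\bigcup_{r\in E(q)}E(\delta(r,a))$ has source $q$, so in $M'$ it is necessarily executed with $q$'s head, even when the genuine $a$-read in $M$ took place at some $r\in E(q)$ of the \emph{opposite} type. Your remark that you ``must track head type through $r$ rather than through $q$'' is precisely what the model $\prt{Q_L,Q_R,\Sigma,\delta',I',F'}$ gives you no mechanism to do: there is no way for a single transition out of $q$ to read with $r$'s head. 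A concrete counterexample: take $Q_L=\{q_0,q_1,f\}$, $Q_R=\{p\}$, $I=\{q_0\}$, $F=\{f\}$, $\delta(q_0,\lambda)=\{p\}$, $\delta(p,a)=\{q_1\}$, $\delta(q_1,b)=\{f\}$, all other values empty. Then $\maL(M)=\{ba\}$, since the only run is $(q_0,ba)\vdash(p,ba)\vdash(q_1,b)\vdash(f,\lambda)$, the $a$ being read at the right end by $p$. But $E(q_0)=\{q_0,p\}$ gives $\delta'(q_0,a)=\{q_1\}$ with $q_0\in Q_L$, so $M'$ has the run $(q_0,ab)\vdash(q_1,b)\vdash(f,\lambda)$ and accepts $ab\notin\maL(M)$; this spurious acceptance persists even if you also enlarge the start set to $I'=\bigcup_{q\in I}E(q)$. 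So $\maL(M')\neq\maL(M)$ and the construction is wrong as stated, not merely in need of routine verification.

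The repair is to close on the \emph{target} side only, so that every consuming transition of $M'$ originates at the very state that performs the genuine read in $M$ (and hence automatically carries the correct head type): set $\delta'(q,a)=E(\delta(q,a))$ for every $q\in Q_L\cup Q_R$ and $a\in\Sigma$, take $I'=\bigcup_{q\in I}E(q)$, and keep $F'=F$ (the enlarged $I'$ handles acceptance of $\lambda$ and of runs consisting only of $\lambda$-moves, while target-side closure absorbs trailing $\lambda$-chains into the last genuine move, since $E(\delta(q,a))$ already contains any final state reachable by $\lambda$-moves after the read). With that definition the head bookkeeping is coherent by construction, and your two-way induction --- unfolding each $\delta'$-step into a read followed by a $\lambda$-chain in one direction, and grouping each genuine read with its \emph{following} maximal $\lambda$-block in the other --- goes through essentially as you sketched. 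Note also that the paper itself only cites \cite{Bed11} for this theorem and contains no in-text proof to compare against, but the corrected closure construction is the natural direct argument, whereas yours, as written, is refuted by the example above.
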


\section{Deterministic linear automata}

There are several nonequivalent notions for ``deterministic''
linear languages (see for example \cite{HO02}). The most general
among those classes is \textbf{DL}, where a language $L\in
\mathbf{DL}$ if it can be generated by a deterministic linear
grammar.
As proved in
\cite{HO02}, \textbf{DL} is a proper subset of $\mathbf{DCFL}\cap
\mathbf{Lin}$, where \textbf{DCFL} is the class of deterministic
context-free languages and \textbf{Lin} is the class of linear
languages.

On the other hand, a \nla is \textbf{deterministic}, \dla in
short, if for each $a\in\Sigma$ and $q\in (Q_L\cup Q_R)$, $\mid
\delta(q,a)\mid\leq 1$. Thus, in a \dla $\delta(q,a)=\{q'\}$ or $\delta(q,a)=\emptyset$. 
For that reason we consider the transition  function of a \dla 
 $M=\prt{Q_L,Q_R,\Sigma,\delta,I,F}$ as a partial function from $(Q_L\cup Q_R)\times\Sigma$ into $(Q_L\cup Q_R)$.

\begin{The} \label{theo-DL-DLA}
If $L\in \mathbf{DL}$ then there is a \dla $M$ such that $\maL(M)=L$.
\end{The}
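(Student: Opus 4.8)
The plan is to start from a language $L\in\mathbf{DL}$, take a deterministic linear grammar $G=\prt{V,T,S,P}$ generating it, and convert $G$ into a \dla by the same constructive route already established for the nondeterministic case, while checking at each stage that determinism is preserved. By Lemma \ref{lem-DLNF} and Proposition \ref{pro-DSLNF} I may assume without loss of generality that $G$ is a \emph{deterministic} linear grammar in \slnf; these two results guarantee that the normal-form conversions do not destroy determinism and do not change the generated language. So the real work is to define a \dla $M$ from a deterministic \slnf grammar and verify $\maL(M)=\maL(G)$.

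The construction itself should mirror the dual algorithms of Theorems \ref{teo-NLA-LG} and \ref{theo-SLNF-NLA}, applied in the direction grammar $\to$ automaton. The natural idea is to let the variables of $G$ become the states of $M$: a variable that is right linear (produces $aA$) becomes a state in $Q_L$, reading a symbol with the left head and moving right, while a left-linear variable (produces $Aa$) becomes a state in $Q_R$, reading with the right head and moving left. The start state set $I$ corresponds to $S$, and the accepting states correspond to the variables carrying the $A\raw\lambda$ productions. A production $A\raw aB$ translates to a transition $B\in\delta(A,a)$ and $A\raw Ba$ to $B\in\delta(A,a)$ on the appropriate head. I would then argue by induction on derivation length that $S\Rawe w$ in $G$ if and only if $(S,w)\vdashe (q_f,\lambda)$ for some accepting $q_f$ in $M$, matching each grammar step to one valid move of the automaton, exactly as in the already-proven equivalence for \slnf grammars.

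The crux of the argument, and the only place where I cannot simply invoke the earlier theorems, is establishing that the resulting $M$ is genuinely \emph{deterministic}, i.e. that $\mid\delta(q,a)\mid\leq 1$ for every state $q$ and symbol $a$. Here the hypothesis that $G$ is a deterministic linear grammar is essential: its defining condition says that whenever $A\raw aBu$ and $A\raw aCv$ are both in $P$ then $B=C$ and $u=v$, so for a fixed variable $A$ and terminal $a$ there is at most one production consuming $a$ first. I expect the main obstacle to be verifying that this no-conflict property survives the normal-form conversions in the precise form needed for the automaton's transition function to be single-valued, and in particular that the conversion never merges a left-consuming and a right-consuming alternative on the same symbol from the same state. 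I would dispatch this by tracing the determinism condition through the auxiliary-variable introductions in the proofs of Lemma \ref{lem-DLNF} and Proposition \ref{pro-DSLNF}: each new variable $C$ is introduced as the unique target of a freshly created production, so it inherits at most one outgoing production per symbol, and the original branching remains governed by the determinism of $G$.

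Finally I would note that this theorem gives only one inclusion, $\mathbf{DL}\subseteq\{\maL(M): M\text{ a }\dla\}$, and that the paper's broader claim of \emph{proper} containment requires separately exhibiting a language accepted by some \dla that lies outside \textbf{DL}; that separation is not part of the present statement and so I would leave it to a subsequent result.
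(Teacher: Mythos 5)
Your proof follows essentially the same route as the paper's: reduce to a deterministic linear grammar in \slnf and read off the \dla whose states are the variables, with right-linear variables in $Q_L$, left-linear ones in $Q_R$, transitions $\delta(A,a)=B$ for $A\raw aB$ or $A\raw Ba$, and $F=\{A : A\raw\lambda\}$. If anything you are slightly more careful than the paper, which cites Proposition \ref{pro-SLNF} where Proposition \ref{pro-DSLNF} is what actually guarantees that the conversion preserves determinism; your tracing of the no-conflict condition through the normal-form algorithms supplies exactly the detail the paper compresses into ``by the construction''.
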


\begin{proof}
Let $G$ be a deterministic linear grammar such that $\maL(G)=L$. From Proposition \ref{pro-SLNF} there exists a 
linear grammar $\widehat{G}=\prt{V',T,S',P'}$ in \slnf such that $\maL(G)=\maL(\widehat{G})$. By the construction,  if $A\rightarrow aB$
and $A\rightarrow aC$ then, $B=C$, and if $A\rightarrow Ba$ and
$A\rightarrow Ca$ then $B=C$. Thus, $M={Q_L,Q_R,T,\delta,\{S\},F}$
where $Q_L=\{A\in V'\; :\; A$ is a right linear variable$\}$,
$Q_R=\{A\in V'\; :\; A$ is a left linear variable$\}$,  $F=\{A\;
:\; A\rightarrow \lambda\}$ and $\delta(A,a)=B$ whenever
$A\rightarrow aB$ or $A\rightarrow Ba$ in $P'$. Clearly, $M$ is a
\dla  such that $\maL(M)=L$.
\end{proof}

\begin{Cor} Let \textbf{DLin} be the class of languages accepted by some \dla. Then

$$\mathbf{DL}\subset \mathbf{DLin}$$
\end{Cor}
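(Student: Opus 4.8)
The plan is to prove the two halves of the strict inclusion separately. The inclusion $\mathbf{DL}\subseteq\mathbf{DLin}$ is immediate from Theorem \ref{theo-DL-DLA}: every $L\in\mathbf{DL}$ is accepted by some \dla and hence lies in $\mathbf{DLin}$. All the real work is therefore in showing that the inclusion is proper, which I would do by exhibiting a single witness language $L\in\mathbf{DLin}\setminus\mathbf{DL}$.

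For the witness I would take the palindrome language $L=\{w\in\{a,b\}^*\; :\; w=w^R\}$. To see that $L\in\mathbf{DLin}$ I would build an explicit \dla $M$ over $\Sigma=\{a,b\}$ with $Q_L=\{q_0\}$, $Q_R=\{p_a,p_b\}$, start state $q_0$, and transitions $\delta(q_0,a)=p_a$, $\delta(q_0,b)=p_b$, $\delta(p_a,a)=q_0$, $\delta(p_b,b)=q_0$, all other values being undefined. The idea is that from $q_0$ the automaton reads one symbol from the left and records it in the target state $p_x\in Q_R$; from $p_x$ it then reads one symbol from the right and continues only if that symbol equals $x$, returning to $q_0$. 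Taking $F=\{q_0,p_a,p_b\}$ accounts for the crossing of the two heads in all cases: the empty word and the even-length palindromes halt in $q_0$, whereas the odd-length palindromes consume their (unconstrained) central symbol from the left and halt in some $p_x$. Since each $\delta(q,a)$ has at most one value, $M$ is genuinely a \dla.

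The correctness claim $\maL(M)=L$ is the first point needing care. I would argue it by analysing which IDs can reach a configuration $(q_f,\lambda)$ with $q_f\in F$: reaching $(q_0,\lambda)$ forces the input to be a sequence of matched outer pairs, i.e. the empty word or an even-length palindrome, while reaching $(p_x,\lambda)$ forces a matched sequence of pairs followed by a single central $x$ read from the left, i.e. an odd-length palindrome. Because the matching transitions $\delta(p_a,b)$ and $\delta(p_b,a)$ are undefined, a computation dies the moment the current left and right symbols disagree, so no non-palindrome can reach a crossing; conversely every palindrome drives a successful computation to a crossing in an accepting state. This is routine but must be written so that both parities and the empty word are covered and so that making $p_a,p_b$ accepting produces no spurious acceptance on a non-palindrome that merely halts without its heads crossing.

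It remains to show $L\notin\mathbf{DL}$. Here I would invoke the result of de la Higuera and Oncina \cite{HO02} quoted above, $\mathbf{DL}\subsetneq\mathbf{DCFL}\cap\mathbf{Lin}$, which in particular gives $\mathbf{DL}\subseteq\mathbf{DCFL}$. Thus it suffices to use the classical fact that the palindrome language is \emph{not} deterministic context-free (a deterministic pushdown automaton cannot locate the midpoint), so that $L\notin\mathbf{DCFL}$ and hence $L\notin\mathbf{DL}$. Combined with $L\in\mathbf{DLin}$ this yields $\mathbf{DL}\subsetneq\mathbf{DLin}$. The main obstacle is not any single ingredient but the bookkeeping in the correctness proof for $M$; the non-membership in $\mathbf{DL}$ reduces to a citation once $\mathbf{DL}\subseteq\mathbf{DCFL}$ is recorded.
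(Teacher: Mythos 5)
Your proof is correct, but it takes a genuinely different route for the strictness half. The inclusion $\mathbf{DL}\subseteq\mathbf{DLin}$ is handled exactly as in the paper, via Theorem \ref{theo-DL-DLA}. For properness, however, the paper uses the witness $\{a^nb^n : n\geq 1\}\cup\{a^nc^n : n\geq 1\}$, which de la Higuera and Oncina explicitly state is not in $\mathbf{DL}$, together with the \dla of Figure \ref{fig-DLA-notDL}; you instead use the palindrome language, accepted by essentially the automaton of Figure \ref{fig-DLA-pal} (your choice $F=\{q_0,p_a,p_b\}$ is the right one --- making the $Q_R$ states accepting is in fact needed to catch odd-length palindromes, a detail the paper's figure glosses over), and you derive $L\notin\mathbf{DL}$ indirectly from $\mathbf{DL}\subseteq\mathbf{DCFL}$ (part of the quoted HO02 result) plus the classical fact that palindromes are not deterministic context-free. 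Both ingredients are available in the paper, so the argument is sound, and your deterministic automaton and its correctness sketch check out against the ID semantics (including the empty word, since $\vdashe$ is reflexive and $q_0\in F$). The trade-off is worth noting: your witness simultaneously yields $\mathbf{DLin}\not\subseteq\mathbf{DCFL}$, which the paper proves separately as Proposition \ref{pro-DLL-DCFL}, but it separates $\mathbf{DL}$ from $\mathbf{DLin}$ only by escaping $\mathbf{DCFL}$ altogether; the paper's witness lies \emph{inside} $\mathbf{DCFL}\cap\mathbf{Lin}$, so it shows the finer fact that $\mathbf{DLin}$ exceeds $\mathbf{DL}$ even among deterministic context-free linear languages, at the cost of leaning on HO02's specific non-membership claim rather than a textbook fact.
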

\begin{proof} Straightforward from Theorem \ref{theo-DL-DLA},
$\mathbf{DL}\subseteq \mathbf{DLin}$.

On the other hand, de la Higuera and Oncina in \cite{HO02} state
that the language $\{a^nb^n\; :\; n\geq 1\}\cup\{a^nc^n\; :\;
n\geq 1\}\not\in \mathbf{DL}$. Nevertheless Figure
\ref{fig-DLA-notDL} presents a \dla which accepts this language and
therefore $\mathbf{DL}\subset \mathbf{DLin}$.
\end{proof}



\begin{figure}
\centering
\begin{tikzpicture}[shorten >=1pt,node distance=2cm,auto]

\node[state,initial]  (q_0)                          {$q_0$};
\node[transition, regular polygon sides=\4, minimum size=0.9cm]          (p_1) [right of=q_0]     { $\;\;P_1\;\;$};
\node[state,accepting]  (q_1)       [above of=p_1]                   {$q_1$};
\node[state,accepting]  (q_2)       [below of=p_1]                   {$q_2$};
\node[transition, regular polygon sides=\4, minimum size=0.9cm]          (p_2) [right of=q_1]     { $\;\;\;P_2\;\;$};
\node[transition, regular polygon sides=\4, minimum size=1cm]          (p_3) [right of=q_2]     {$\;\;P_3\;\;$ };

\path[->] (q_0) edge                 node               {\small a}     (p_1)
          (p_1) edge                 node               {\small b}     (q_1)
          (p_1) edge                 node               {\small c}     (q_2)
          (q_1) edge  [bend left]   node [above]       {\small a}     (p_2)
          (p_2) edge  [bend left]   node [above]       {\small b}     (q_1)
          (q_2) edge  [bend right]   node [above]       {\small a}     (p_3)
          (p_3) edge  [bend right]   node [above]       {\small c}     (q_2);
\end{tikzpicture}
\caption{A \dla which accepts a language not in \textbf{DL}.}\label{fig-DLA-notDL}
\end{figure}
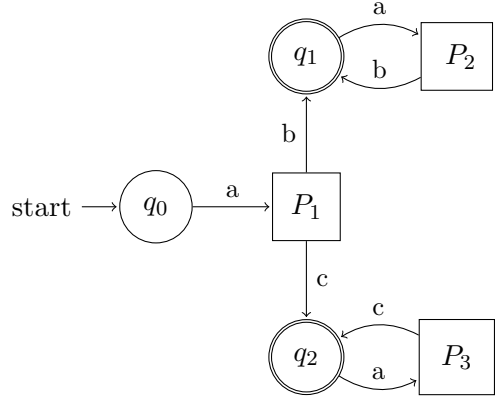


   In the following proposition we relate the class \textbf{DLin} with the class of
   deterministic context-free languages, \textbf{DCFL} in short.

\begin{Pro}\label{pro-DLL-DCFL}
$$\mathbf{DLin}- \mathbf{DCFL}\neq \emptyset\mbox{ and }\mathbf{DCFL}-\mathbf{DLin}\neq \emptyset$$

\end{Pro}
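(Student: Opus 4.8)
The plan is to establish the two non-containments separately, each by exhibiting a witness language together with the appropriate acceptance or non-acceptance argument. For the first part, $\mathbf{DLin}-\mathbf{DCFL}\neq\emptyset$, I would look for a language that is linear and accepted by a deterministic linear automaton yet fails to be a deterministic context-free language. The natural candidate is a language built around palindromic or symmetric matching whose determinism, in the \dla sense, comes for free from reading simultaneously from both ends of the tape, but which provably lies outside \textbf{DCFL}. A standard example is the language of even-length palindromes $\{ww^R : w\in\{a,b\}^*\}$ or a close variant; this is classically known not to be a deterministic context-free language (a \pda forced to read left-to-right cannot deterministically locate the center), yet a \dla can check the match deterministically because its two heads converge from opposite ends. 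So the first step is to write down the \dla for this witness and verify its transition function has $|\delta(q,a)|\leq 1$ everywhere, then invoke the classical result that the chosen language is not in \textbf{DCFL}.

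For the second part, $\mathbf{DCFL}-\mathbf{DLin}\neq\emptyset$, I would exhibit a deterministic context-free language that is not even linear, hence certainly not in \textbf{DLin} since $\mathbf{DLin}\subseteq\mathbf{Lin}$. The language $\{a^mb^ma^nb^n : n,m\geq 1\}$, already mentioned in the introduction as a non-linear context-free language (citing \cite{ABB97}), is the obvious choice: it is easily seen to be deterministic context-free (a deterministic \pda pushes and pops twice in sequence), and being non-linear it lies outside \textbf{Lin} and therefore outside \textbf{DLin}. The step here is to state that this language is in \textbf{DCFL}, recall from the introduction that it is not linear, and conclude it witnesses the second difference. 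I would rely on the fact, implicit in the definitions, that every language accepted by a \dla is linear, so that $\mathbf{DLin}\subseteq\mathbf{Lin}$.

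The main obstacle I anticipate is the first part, specifically the interplay between the two notions of determinism. A \dla is deterministic in a very different sense from a deterministic \pda: its two-headed, two-sided reading discipline lets it accept genuinely non-\textbf{DCFL} languages, but one must be careful that the witness is actually \dla-acceptable. The subtlety is that a \dla still moves one head at a time according to whether the current state is in $Q_L$ or $Q_R$, so the convergence of the heads must be orchestrated by a deterministic transition structure. Verifying that the even-palindrome matching can be driven deterministically — alternating or otherwise scheduling left and right moves without ever facing a nondeterministic branch — is where the real work lies, together with a clean citation or short argument that the language is not in \textbf{DCFL}. Once both witnesses are in hand, the conclusion $\mathbf{DLin}-\mathbf{DCFL}\neq\emptyset$ and $\mathbf{DCFL}-\mathbf{DLin}\neq\emptyset$ follows immediately.
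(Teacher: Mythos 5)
Your proposal is correct and matches the paper's own proof essentially step for step: the paper also witnesses $\mathbf{DLin}-\mathbf{DCFL}\neq\emptyset$ with the (even) palindromes over $\{a,b\}$, accepted by a simple two-state-per-symbol \dla whose heads converge from both ends, together with the classical fact that palindromes are not in $\mathbf{DCFL}$, and witnesses $\mathbf{DCFL}-\mathbf{DLin}\neq\emptyset$ with $\{a^mb^ma^nb^n : m,n\geq 1\}$, which is deterministic context-free but not linear, hence outside $\mathbf{DLin}\subseteq\mathbf{Lin}$. The deterministic transition structure you worried about is exactly the paper's Figure with $q_0$ accepting and one $Q_R$-state per input symbol, so no nondeterministic branch ever arises.
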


\begin{proof} The \dla in Figure \ref{fig-DLA-pal} clearly accepts the language
of palindromes over the alphabet $\{a,b\}$. But, this language is
not in \textbf{DCFL} (see \cite{HU79,LV95}). So, $\mathbf{DLin}-
\mathbf{DCFL}\neq \emptyset$. Conversely, as it is well known, the
language $L=\{a^mb^ma^nb^n\; :\; m\geq 1$ and $n\ge 1\}\in
\mathbf{DCFL}-\mathbf{Lin}$ and therefore
$\mathbf{DCFL}-\mathbf{DLin}\neq\emptyset$
\end{proof}

Since the palindromes on the alphabet $\Sigma=\{a,b\}$ is a linear language, an immediate consequence of Proposition \ref{pro-DLL-DCFL} is the following corollary.

\begin{Cor}
 $$\mathbf{DLin}- (\mathbf{DCFL}\cap \mathbf{Lin})\neq \emptyset$$
\end{Cor}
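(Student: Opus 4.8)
The plan is to deduce the corollary directly from Proposition~\ref{pro-DLL-DCFL} together with the observation, already noted in the text just before the statement, that the language of palindromes over $\Sigma=\{a,b\}$ is linear. The target is to exhibit a single language that lies in $\mathbf{DLin}$ but outside the intersection $\mathbf{DCFL}\cap\mathbf{Lin}$, and the palindrome language is the natural candidate since it was the witness already used for the first half of Proposition~\ref{pro-DLL-DCFL}.

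First I would recall that the \dla in Figure~\ref{fig-DLA-pal} accepts the palindrome language $L_{\mathrm{pal}}$ over $\{a,b\}$, so $L_{\mathrm{pal}}\in\mathbf{DLin}$. Next I would invoke the proof of Proposition~\ref{pro-DLL-DCFL}, where it is established (citing \cite{HU79,LV95}) that $L_{\mathrm{pal}}\notin\mathbf{DCFL}$. The key logical step is then purely set-theoretic: if a set is not a member of $\mathbf{DCFL}$, it cannot be a member of the intersection $\mathbf{DCFL}\cap\mathbf{Lin}$, regardless of whether it belongs to $\mathbf{Lin}$. Hence $L_{\mathrm{pal}}\notin\mathbf{DCFL}\cap\mathbf{Lin}$, which gives $L_{\mathrm{pal}}\in\mathbf{DLin}-(\mathbf{DCFL}\cap\mathbf{Lin})$, and the nonemptiness follows immediately.

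There is essentially no obstacle here, since the corollary is a formal weakening of information already assembled: the inclusion $\mathbf{DCFL}\cap\mathbf{Lin}\subseteq\mathbf{DCFL}$ makes the complement $\mathbf{DLin}-(\mathbf{DCFL}\cap\mathbf{Lin})$ at least as large as $\mathbf{DLin}-\mathbf{DCFL}$, and the latter is already known to be nonempty by Proposition~\ref{pro-DLL-DCFL}. The only thing to check is that I am entitled to use the palindrome language as a single uniform witness, which is clear because it simultaneously sits in $\mathbf{DLin}$ (Figure~\ref{fig-DLA-pal}) and outside $\mathbf{DCFL}$. Thus a clean one-line argument suffices: pick $L_{\mathrm{pal}}$, note it is accepted by a \dla, note it is not deterministic context-free, and conclude it witnesses the claimed nonempty difference.

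If I wanted to make the containment structure fully explicit rather than relying on the single witness, I would instead phrase it as the general inclusion $\mathbf{DLin}-\mathbf{DCFL}\subseteq\mathbf{DLin}-(\mathbf{DCFL}\cap\mathbf{Lin})$, which holds for any three classes, and then cite $\mathbf{DLin}-\mathbf{DCFL}\neq\emptyset$ from Proposition~\ref{pro-DLL-DCFL}. Either presentation yields the result in a couple of lines, so the corollary is genuinely immediate and the bulk of the work has already been done in the preceding proposition and in the remark establishing that palindromes form a linear language.
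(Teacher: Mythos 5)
Your proposal is correct and matches the paper's own (implicit) argument: the paper states the corollary as an immediate consequence of Proposition~\ref{pro-DLL-DCFL}, using exactly your witness --- the palindrome language over $\{a,b\}$, which lies in $\mathbf{DLin}$ by Figure~\ref{fig-DLA-pal}, is not in $\mathbf{DCFL}$, and hence is not in $\mathbf{DCFL}\cap\mathbf{Lin}$. Your additional remark that $\mathbf{DLin}-\mathbf{DCFL}\subseteq\mathbf{DLin}-(\mathbf{DCFL}\cap\mathbf{Lin})$ is just a clean set-theoretic restatement of the same step.
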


A  natural question that arises from this corollary is on the existence or not of a  language in 
$\mathbf{DCFL}\cap \mathbf{Lin}$  which is not in $\mathbf{DLin}$. We have the following conjecture:

\begin{Conj}\label{conj-DLin}
 $(\mathbf{DCFL}\cap \mathbf{Lin})-\mathbf{DLin}=\emptyset$
\end{Conj}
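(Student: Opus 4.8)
The goal is to establish the inclusion $\mathbf{DCFL}\cap\mathbf{Lin}\subseteq\mathbf{DLin}$, since the reverse containment already fails by the preceding corollary. The plan is to argue at the level of automata. Let $L\in\mathbf{DCFL}\cap\mathbf{Lin}$. Because $L\in\mathbf{DCFL}$ it is accepted by a deterministic pushdown automaton \cite{HU79}, and because $L\in\mathbf{Lin}$ it is accepted by a one-turn \pda \cite{Har78}. The idea is to fuse these two facts into a single deterministic one-turn device: a deterministic \pda that, on every input, pushes while scanning a prefix and then pops while scanning the matching suffix, making exactly one turn. Once such a device is available, the push phase can be assigned to the left read head of a \dla and the pop phase to the right read head, since the top-of-stack comparison performed during the pop phase is precisely the left-to-right versus right-to-left matching realized by the two heads of a linear automaton meeting in the middle.

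I would then carry out the conversion on the grammar side, which is cleaner. By Proposition \ref{pro-DSLNF} and the construction in the proof of Theorem \ref{theo-DL-DLA}, a linear grammar in \slnf yields a \dla exactly when it is \emph{state-deterministic}: each variable is consistently left linear or right linear, and for each variable $A$ and each terminal $a$ there is at most one production of the form $A\to aB$ (when $A$ is right linear) or $A\to Ba$ (when $A$ is left linear). Thus the conjecture reduces to showing that every language in $\mathbf{DCFL}\cap\mathbf{Lin}$ admits such a state-deterministic linear grammar. Starting from a deterministic one-turn \pda for $L$, one would read its determinism into the productions, using the unique pushed symbol to fix the single production per pair $(A,a)$ in the pushing (left-linear) phase and the unique matched pop to fix it in the popping (right-linear) phase, while the consistent left/right typing of variables is guaranteed by the single turn.

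The main obstacle is the \emph{turn point}. A deterministic \pda scans its input strictly from left to right and must decide, from the prefix read so far, the instant at which it stops pushing and begins popping; a \dla, by contrast, has no distinguished moment of decision, as its two heads advance independently from the two ends and the split is fixed only implicitly where they meet. Reconciling these is the crux: I must show that for a linear $\mathbf{DCFL}$ the turn can always be detected deterministically, i.e.\ that linearity together with determinism forces the existence of a deterministic one-turn \pda. It is conceivable that some linear $\mathbf{DCFL}$ requires the turn to be placed using information about the suffix that a left-to-right deterministic scan cannot anticipate, in which case the naive construction breaks down and a more global argument---or even a counterexample refuting the conjecture---would be needed. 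This tension between left-to-right determinism and simultaneous two-sided reading is exactly why the statement is offered as a conjecture rather than proved as a theorem.
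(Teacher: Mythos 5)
First, a point of comparison: the paper offers no proof of this statement at all --- it is stated as Conjecture \ref{conj-DLin} and explicitly deferred to future work in the final remarks --- so there is no ``paper proof'' to match your attempt against. The honest question is whether your argument closes the conjecture, and it does not; to your credit, you say so yourself. The central step of your plan --- fusing ``$L$ is accepted by some deterministic \pda'' with ``$L$ is accepted by some one-turn \pda'' into a single \emph{deterministic one-turn} \pda --- is not a routine combination of the two hypotheses: the two machines witnessing them are different, and nothing forces a deterministic acceptor of $L$ to be one-turn or a one-turn acceptor to be determinizable. Your reduction to ``every language in $\mathbf{DCFL}\cap\mathbf{Lin}$ admits a state-deterministic \slnf grammar'' is likewise essentially a restatement of the conjecture in grammar form (a state-deterministic \slnf grammar is exactly the grammar-side image of a \dla under Theorem \ref{theo-DL-DLA}), so the ``reduction'' transports the difficulty rather than resolving it. The turn-point obstacle you name is real, but naming an obstacle is not the same as overcoming it.

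There is also a second gap that your sketch glosses over, independent of the turn point: even granted a deterministic one-turn \pda, the proposed reassignment of the push phase to the left head and the pop phase to the right head does not work, because the two models match symbols in \emph{opposite orders}. A one-turn \pda compares inside-out: the first pop matches the symbol pushed last, i.e.\ the symbol adjacent to the turn point, and works outward toward the ends, reading the suffix left to right. A \dla compares outside-in: its heads pair the leftmost symbol with the rightmost first and move toward the middle. In the nondeterministic setting this reversal is absorbed by the linear-grammar correspondence (derivations generate outside-in), but deterministically it is substantive, and in fact the paper's own Proposition \ref{pro-DLL-DCFL} shows the two notions of determinism are incomparable --- palindromes are in $\mathbf{DLin}-\mathbf{DCFL}$, while $\{a^mb^ma^nb^n : m,n\geq 1\}$ lies in $\mathbf{DCFL}-\mathbf{DLin}$ --- so no direct simulation of one deterministic device by the other can exist in either direction. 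Any genuine attack on the conjecture would have to argue at a more global level (e.g.\ structural properties of linear grammars for $\mathbf{DCFL}$ languages, or a pumping-style counterexample targeting languages whose split point depends on suffix information), rather than machine-to-machine translation.
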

Case this conjecture is correct, then we will have that $(\mathbf{DCFL}\cap \mathbf{Lin})\subset \mathbf{DLin}$.



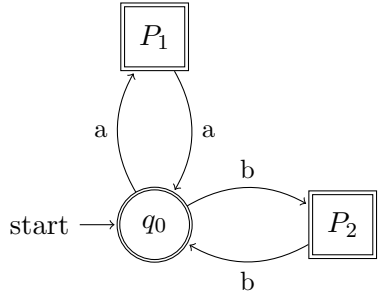
\begin{figure}
\centering
\begin{tikzpicture}[shorten >=1pt,node distance=2.5cm,auto]

\node[state,initial,accepting]  (q_0)                          {$q_0$};
\node[transition, regular polygon sides=\4, minimum size=0.9cm]          (p_1) [above of=q_0]     { $\;\;P_1\;\;$};
\node[transition, regular polygon sides=\4, minimum size=0.8cm]          (p_4) [above of=q_0]     { };
\node[transition, regular polygon sides=\4, minimum size=0.9cm]          (p_2) [right of=q_0]     { $\;\;P_2\;\;$};
\node[transition, regular polygon sides=\4, minimum size=0.8cm]          (p_3) [right of=q_0]     { };

\path[->] (q_0) edge [bend left]    node               {\small a}        (p_1)
                edge [bend left]    node               {\small b}          (p_2)   
          (p_1) edge [bend left]    node               {\small a}     (q_0)
          (p_2) edge   [bend left]  node               {\small b}     (q_0);
\end{tikzpicture}
\caption{A \dla which accepts the palindromes language.}\label{fig-DLA-pal}
\end{figure}

Each deterministic linear language is linear,  but as it will be
proved below, the converse does not hold.

%

\begin{Pro}\label{prop-DLIn-Lin} Let \textbf{Lin} be the class of linear languages.
Then

$$\mathbf{DLin}\subset \mathbf{Lin}$$
\end{Pro}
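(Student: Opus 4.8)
The plan is to establish the inclusion $\mathbf{DLin}\subseteq\mathbf{Lin}$ directly from the results already available, and then to exhibit a single linear language lying outside $\mathbf{DLin}$, which yields the strictness.

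First I would observe that the inclusion is immediate. Every \dla is by definition a \nla, and every \nla is in particular a $\lambda$-\nla (one that happens to use no $\lambda$-transitions and whose transition function is single-valued). Hence, if $L\in\mathbf{DLin}$, then $L=\maL(M)$ for some \dla $M$ regarded as a $\lambda$-\nla, and Theorem \ref{teo-NLA-LG} supplies a linear grammar $G$ with $\maL(G)=\maL(M)=L$. Therefore $L\in\mathbf{Lin}$, so $\mathbf{DLin}\subseteq\mathbf{Lin}$.

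For properness I would produce a linear language that no \dla accepts. A convenient candidate is
$$L_0=\{a^nb^n:n\geq 1\}\cup\{a^nb^{2n}:n\geq 1\}.$$
That $L_0$ is linear is clear: each of the two sets is generated by an obvious linear grammar, and the union of two linear languages is again linear (combine the grammars under a fresh start symbol). The heart of the matter is to show $L_0\notin\mathbf{DLin}$. Suppose, for contradiction, that a \dla $M$ with $N$ states satisfies $\maL(M)=L_0$. Since $M$ is deterministic and has a single start state, each input determines a \emph{unique} computation. On an input $a^nb^m$ the left head scans a block of $a$'s (and possibly some leading $b$'s) while the right head scans a block of trailing $b$'s, the two scans being interleaved according to the visited states. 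For large $n$, both the $a$-scan performed by the left head and the $b$-scan performed by the right head must, by the pigeonhole principle, traverse states periodically with periods bounded by $N$. I would combine these periods to pump the $b$-block, turning an accepted word $a^nb^n$ (or $a^nb^{2n}$) into a word $a^nb^k$ whose unique computation is still accepting but with $n<k<2n$; such a word is not in $L_0$, a contradiction. This forces $L_0\notin\mathbf{DLin}$, and together with the inclusion above gives $\mathbf{DLin}\subset\mathbf{Lin}$.

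The main obstacle is precisely this lower bound. The difficulty lies in the bookkeeping for the interleaved two-head schedule: unlike a one-way deterministic finite automaton, the order in which the two heads advance is itself state-dependent, so the periodicity extracted by pigeonhole concerns a two-dimensional progression of head positions rather than a single scan. I would address this by isolating, for sufficiently large $n$, a contiguous stretch of the computation during which only $b$'s are read (by either head) and the pair of head positions recurs in the same control state; pumping that stretch changes only the number of $b$'s while preserving acceptance, and taking $n$ large guarantees that the pumped count can be placed strictly between the two admissible ratios, i.e.\ $n<k<2n$, thereby escaping $L_0$ and completing the contradiction.
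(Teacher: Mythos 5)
Your inclusion half is fine: a \dla is a \nla, hence a $\lambda$-\nla, and Theorem \ref{teo-NLA-LG} gives a linear grammar, so $\mathbf{DLin}\subseteq\mathbf{Lin}$; this is exactly the (implicit) reasoning the paper relies on. For strictness the paper uses a different witness, $L(3)=\{a^mb^n\,:\,1\leq m\leq n\leq 3m\}$, and argues directly that a \dla would be forced to make a nondeterministic choice when matching an $a$ against one, two or three $b$'s. Your witness $L_0=\{a^nb^n\}\cup\{a^nb^{2n}\}$ is a reasonable alternative, but your proof that $L_0\notin\mathbf{DLin}$ contains a genuine gap, so the strictness half does not go through as written.

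The gap is the key lemma you isolate at the end. First, as literally stated it is vacuous: every move of a \nla consumes one input symbol and advances a head monotonically inward, so the pair of head positions can never recur within a single computation; at best the control state can recur. Second, and more seriously, even the repaired statement fails: a contiguous stretch of at least $\mid Q_L\cup Q_R\mid +1$ moves reading only $b$'s need not exist, no matter how large $n$ is. The head schedule is state-dependent, and nothing prevents the \dla from strictly alternating left-head and right-head moves; on $a^nb^n$ the accepting computation then consists of $n$ $a$-reads interleaved one-for-one with $n$ $b$-reads, so every maximal run of consecutive $b$-reads has length $1$ (and length at most $2$ on $a^nb^{2n}$), independent of $n$. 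Pigeonhole therefore never produces the long $b$-only window your pump requires. A repair must pump \emph{mixed} loops: a state repetition in which the left head consumes $i$ symbols and the right head $j$ symbols yields accepted words $a^{n+ti}b^{n+tj}$ for all $t\geq 0$ (when the consumed blocks are homogeneous), and membership of all these words in $L_0$ forces $j=i$ along computations accepting $a^nb^n$ but $j=2i$ along those accepting $a^nb^{2n}$; one then still needs an argument, e.g.\ that the two deterministic computations on $a^nb^n$ and $a^nb^{2n}$ coincide as long as both heads scan identical symbols, to extract a contradiction from these incompatible loop ratios. None of this bookkeeping is in your sketch. A minor further point: the paper's definition of \dla restricts only $\delta$, leaving $I$ a set, so your appeal to a unique computation needs either $\mid I\mid =1$ or a remark that finitely many start states do not affect the argument.
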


\begin{proof} Let $L(3)$ be the linear language in equation
(\ref{eq-NLL-ex}). Suppose that $L(3)\in \mathbf{DLin}$. Then,
there is a \dla $M$ such that $\maL(M)=L(3)$. Let $a^mb^n$ be a
string in $L(3)$ and $q_0\in I$. We have two cases:

In case $q_0\in Q_L$, then first $M$ must read
the leftmost $a$ in $a^mb^n$. If $\delta(q_0,a)\in Q_R$ then the
$a$ needs to match with one, two or three $b's$ what clearly
requires a nondeterministic choice and therefore is a
contradiction. If $\delta(q_0,a)=q\in Q_L$ then $M$ must read a
new $a$ and again either $\delta(q,a)\in Q_R$ in which case $M$
must read one, two or three $b's$ or $\delta(q,a)\in Q_L$ in which
case $M$ must read a new $a$, and so on. But in some moment $M$
should make a match between $a$ with one, two or three $b's$.

The case of $q_0\in Q_R$ is analogous. Therefore, $L(3)\in \mathbf{Lin}-\mathbf{DLin}$.

%
%
\end{proof}

As corollary of Prop. \ref{prop-DLIn-Lin}, we have that there are \nla for which there is no equivalent \dla, i.e. a \dla that accepts the same 
linear language. However, the next theorem provides a characterization of the \nla that are equivalent to some \dla, i.e. 
\nla which accept deterministic linear languages.

Given a \nla $M_N=\prt{Q_L,Q_R,\Sigma,\delta,I,F}$, define 
$\widetilde{Q}$ as the least set containing $\{\{q\}: q\in I\}$ and  such that 
for each $X\in \widetilde{Q}$ and $a\in\Sigma$ we have that 
 $\bigcup\limits_{q\in X} \delta(q,a)\in \widetilde{Q}$. Thus, $\widetilde{Q}\subseteq \mathcal{P}(Q_L\cup Q_R)$.

\begin{The}
 Let $L\subseteq \Sigma^*$ be a language.
 Then $L\in DLin$ iff there exists a \nla $M_N=(Q_L,Q_R,\Sigma,\delta,I,F)$ such that $\maL(M_N)=L$ and for each 
 $X\in \widetilde{Q}$, $X\subseteq Q_L$ or 
 $X\subseteq Q_R$.
\end{The}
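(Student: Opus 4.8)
The plan is to prove both implications, the forward one being nearly immediate and the backward one carried out by a subset (powerset) construction adapted to the two read heads of a linear automaton.

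For the forward implication, suppose $L\in\mathbf{DLin}$, so by definition there is a \dla $M=\prt{Q_L,Q_R,\Sigma,\delta,I,F}$ with $\maL(M)=L$. Since a \dla is in particular a \nla, I take $M_N=M$ and examine the set $\widetilde{Q}$ it induces. Because $\mid\delta(q,a)\mid\leq 1$ for every state $q$ and symbol $a$, applying the closure operation $X\mapsto\bigcup_{q\in X}\delta(q,a)$ to a singleton yields a set of size at most one; starting from the singletons $\{q\}$ with $q\in I$, an easy induction shows that every member of $\widetilde{Q}$ is either empty or a singleton. Each singleton $\{q\}$ is trivially contained in $Q_L$ or in $Q_R$ according to the type of $q$, and the empty set is vacuously contained in $Q_L$; hence the required homogeneity condition holds.

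For the backward implication, suppose such an $M_N=\prt{Q_L,Q_R,\Sigma,\delta,I,F}$ exists with the homogeneity property. I would build a \dla $M_D$ whose state set is $\widetilde{Q}$, placing $X$ among the left-type states when $X\subseteq Q_L$ and among the right-type states when $X\subseteq Q_R$ (this is well defined precisely because of homogeneity), with transition $\Delta(X,a)=\bigcup_{q\in X}\delta(q,a)$, start states $\{\,\{q\}:q\in I\,\}$, and accepting states $\{X\in\widetilde{Q}:X\cap F\neq\emptyset\}$. By the closure property of $\widetilde{Q}$, $\Delta$ lands inside $\widetilde{Q}$ and is single valued, so $M_D$ is a genuine \dla; note that the \dla model permits several start states, which is what lets me keep the start singletons separate instead of merging $I$ into a single set that might fail to be homogeneous.

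The heart of the argument, and the step I expect to be the main obstacle, is the correctness of this construction via a synchronization lemma. I would prove by induction on the number of moves that, for a fixed start singleton $\{q_0\}$, all surviving computation paths of $M_N$ from $q_0$ read the same symbol from the same end of the tape at each step. The inductive invariant is that after $i$ steps every such path occupies a state of the common set $X_i\in\widetilde{Q}$ and has consumed the same prefix and the same suffix of the input; since $X_i$ is homogeneous all these paths next move the same read head, and since they share the same remaining string they all read the same symbol $c_i$, landing collectively in $X_{i+1}=\bigcup_{q\in X_i}\delta(q,c_i)$. Thus $X_i$ is exactly the set of states reachable from $q_0$ by consuming that particular prefix and suffix, so when the whole input is consumed $X_i$ meets $F$ iff some accepting path of $M_N$ from $q_0$ exists. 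Ranging over all $q_0\in I$ then yields $\maL(M_D)=\maL(M_N)=L$, and therefore $L\in\mathbf{DLin}$. The crux is really this read-order synchronization: without homogeneity, distinct nondeterministic branches could consume different numbers of symbols from the left and right after the same number of steps, and the single frontier $X_i$ would no longer correspond to a well-defined configuration of a deterministic linear automaton. Establishing that closure-homogeneity of $\widetilde{Q}$ is exactly the condition ruling this out is where the real work lies; the remaining bookkeeping (treating $\emptyset$ as an undefined transition and matching the acceptance condition $(q,\lambda)$ with $X\cap F\neq\emptyset$) is routine.
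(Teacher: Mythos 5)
Your proposal is correct and takes essentially the same route as the paper: the forward direction by observing that a \dla viewed as a \nla yields only singletons (or $\emptyset$) in $\widetilde{Q}$, and the backward direction by the same subset construction, with states $X\in\widetilde{Q}$ typed left or right according to homogeneity, start states $\{\{q\}: q\in I\}$, accepting states $\{X : X\cap F\neq\emptyset\}$, and $\Delta(X,a)=\bigcup_{q\in X}\delta(q,a)$. If anything, your synchronization lemma is more careful than the paper's own verification, which checks only the inclusion $\maL(M_N)\subseteq\maL(M_D)$ (and asserts $q_{[j]}=\{q_{(j)}\}$, where in general only $q_{(j)}\in q_{[j]}$ holds), whereas you establish both inclusions by showing $X_i$ is exactly the set of states reachable after the synchronized consumption of the input.
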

\begin{proof}
 ($\Rightarrow$) If $L\in DLin$ then there is a \dla $M_D=\prt{Q_L,Q_R,\Sigma,\delta_D,I,F}$ such that $\maL(M_D)=L$. Let 
 $M_N=\prt{Q_L,Q_R,\Sigma,\delta,I,F}$ where $\delta(q,a)=\{\delta_D(q,a)\}$.  Clearly, $M_N$ is a \nla such that 
 $\maL(M_N)=\maL(M_D)=L$ and 
 $\widetilde{Q}=\{\{q\}: q\in Q\}$. Therefore, for 
 each $X\in\widetilde{Q}$, $X=\{q\}$ for some $q\in Q_L\cup Q_R$ and so, trivially,  $X\subseteq Q_L$ or $X\subseteq Q_R$.
 
 ($\Leftarrow$) Let $M_D=\prt{\widetilde{Q}_L,\widetilde{Q}_R,\Sigma,\delta_D,I',F'}$  where $\widetilde{Q}_L=\{X\in \wp(W): X\subseteq Q_L\}$, 
 $\widetilde{Q}_R=\{X\in \wp(W): X\subseteq Q_R\}$,  $I'=\{\{q\}: q\in I\}$, $F'=\{X\in \widetilde{Q}:X\cap F\neq \emptyset\}$ and 
 $\delta_D(X,a)=\bigcup\limits_{q\in X} \delta(q,a)$ for each $a\in \Sigma$ and $X\in \widetilde{Q}$. Clearly, $M_D$ is \dla.
 Moreover, if $w\in \maL(M_N)$, then $(q_i,w)\vdashe_{M_N} (q_f,\lambda)$ for some  $q_i\in I$ and $q_f\in F$. Therefore, if 
 $w=a_1\ldots a_n$ then there are states (possibly with repetitions) $q_{(1)},\ldots,q_{(n)}\in Q$ such that $q_{(0)}=q_i$, 
 $q_{(n)}=q_f$ and $(q_{(0)},a_1\ldots a_n)\vdash_{M_N} (q_{(1)},a_2\ldots a_n)\vdash_{M_N} \ldots \vdash_{M_N} 
 (q_{(n-1)},a_n)\vdash_{M_N} (q_{(n)},\lambda)$.
And in this case, clearly, we have that $(q_{[0]},a_1\ldots a_n)\vdash_{M_D} (q_{[1]},a_2\ldots a_n)\vdash_{M_D} \ldots \vdash_{M_D} 
 (q_{[n-1]},a_n)\vdash_{M_D} (q_{[n]},\lambda)$ when $q_{[j]}=\{q_{(j)}\}$ for each $j=0,\ldots,n$. Therefore, $w\in \maL(M_D)$. 
\end{proof}

\section{A enumerable hierarchy of linear languages}

Pushdown automata, \pda in short, have a similar characteristic to
linear automata: their nondeterministic version is more powerful
than their deterministic version. In \pda this difference allowed to
define several ways to measure nondeterminism and consequently
to determine several hierarchies of classes of context-free
languages varying from  \textbf{DCFL} into  (in
the limit) \textbf{CFL}, the class of Context-Free Languages
\cite{Bed06,Her97,SY94,VS81}\footnote{There are other hierarchies
varying from \textbf{DCFL} into \textbf{CFL}, but based in other
machine models, e.g. based on contraction automata \cite{Sol75}
and based on restarting automata \cite{Jan95,Mra01,Ott03}.}.
From that hierarchy of classes  we can
establish  a enumerable hierarchy of classes of linear languages in a simple way. Consider the
hierarchy $CFL(1), CFL(2),\ldots$ determined in \cite{Bed06}, in
this hierarchy $CFL(1)=\mathbf{DCFL}$, $CFL(k)\subset CFL(k+1)$
for each $k\geq 1$ and $\lim_{k\rightarrow \infty} CFL(k)
=\mathbf{CFL}$. So, defining $LL(k)=CFL(k)\cap \mathbf{Lin}$ we will have a enumerable
hierarchy $LL(1) \subset LL(2) \subset \ldots $ such that,
$LL(1)=\mathbf{DCFL}\cap \mathbf{Lin}$  and the limit of the
hierarchy is the class of linear languages, i.e. $\lim_{k\rightarrow \infty} LL(k) =\mathbf{Lin}$.

In the following, we will provide a enumerable hierarchy for classes of linear languages starting from 
\textbf{DLin} going, in the limits, to \textbf{Lin}.

\begin{Def}
 Let $M=\prt{Q_L,Q_R,\Sigma,\delta,I,F}$ be a \nla. The degree of explicit nondeterminism. of $M$ is the following:
 
$$Ndeg(M)=\left (\sumat{q\in Q_L\cup Q_R,a\in \Sigma} \mid\delta(q,a)\mid\right )-
\mid\{ (q,a)\in Q\times \Sigma : \delta(q,a)\neq\emptyset\}\mid$$
Let $Lin(k)$ be the class of linear languages which can be accepted by a \nla with  degree of explicit nondeterminism. $k$. 
Formally,

$$Lin(k)=\{\maL(M) : M\mbox{ is an \nla and }Ndeg(M)=k\}$$
\end{Def}

\begin{The}
 $Lin(0)=\mathbf{DLin}$, $Lin(k-1)\subset Lin(k)$ for each $k\in \mathbb{N}^+$ and 
 $\bigcup_{k\in\mathbb{N}} Lin(k)=\mathbf{Lin}$.
\end{The}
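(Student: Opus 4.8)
The plan is to prove the three assertions separately, the first and third being bookkeeping and the middle chain carrying all the difficulty. For $Lin(0)=\mathbf{DLin}$ I would simply unwind the definitions: $Ndeg(M)=\sum_{(q,a):\delta(q,a)\neq\emptyset}(|\delta(q,a)|-1)$ is a sum of nonnegative terms, so it vanishes iff $|\delta(q,a)|\le 1$ for every state $q$ and symbol $a$, which is exactly the defining condition of a \dla. Hence a language lies in $Lin(0)$ iff it is accepted by a \dla, i.e. iff it is in $\mathbf{DLin}$. For $\bigcup_k Lin(k)=\mathbf{Lin}$ I would argue both inclusions: every member of each $Lin(k)$ is accepted by a \nla and so is linear by Theorem \ref{teo-NLA-LG}, giving $\bigcup_k Lin(k)\subseteq\mathbf{Lin}$; conversely, for any linear $L$, Theorem \ref{theo-SLNF-NLA} yields a $\lambda$-\nla for $L$ and Theorem \ref{theo-lambda-nla-nla} converts it to a \nla $M$ with $\maL(M)=L$, and since $M$ is finite, $k:=Ndeg(M)$ is a well-defined natural number with $L\in Lin(k)$, giving $\mathbf{Lin}\subseteq\bigcup_k Lin(k)$.

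For the chain $Lin(k-1)\subset Lin(k)$ I would treat inclusion and strictness apart. The inclusion is achieved by a harmless degree-raising gadget. Given $L\in Lin(k-1)$ witnessed by a \nla $M$ with $Ndeg(M)=k-1$, pick any $a\in\Sigma$, add a fresh state $p$ (neither initial nor accepting) and two fresh dead states $s_1,s_2$ (no outgoing transitions, not accepting), and set $\delta(p,a)=\{s_1,s_2\}$. Since $p$ is unreachable the accepted language is unchanged, while the degree increases by exactly $|\delta(p,a)|-1=1$; thus the new automaton accepts $L$ with degree $k$, so $L\in Lin(k)$. (The case $\Sigma=\emptyset$ is trivial.) Iterating the inclusion also gives $Lin(k-1)=\{\maL(M):M\text{ a \nla},\ Ndeg(M)\le k-1\}$, which is the form in which I would use it below.

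The strictness is the real work. I would fix the witness family $L_k=\{a^n b^{rn}:n\ge 1,\ 1\le r\le k+1\}$, a finite union of linear languages, hence linear. The upper bound $L_k\in Lin(k)$ is a construction: from the start state, on the first $a$, branch into $k+1$ ``modes'', the $r$-th being a deterministic sub-automaton that verifies the exact ratio $r$ between $a$'s and $b$'s as the two heads move inward. All branching is concentrated in that single transition of out-degree $k+1$, so $Ndeg=k$. The hard half is the lower bound $L_k\notin Lin(k-1)$, i.e. every \nla accepting $L_k$ has degree at least $k$. Here I would argue by pumping: for each ratio $r$ and arbitrarily long $a^n b^{rn}$, pigeonhole on the finite state set extracts a cycle in the transition graph consuming $a$'s and $b$'s in ratio $r$. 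The decisive contrast is with the deterministically acceptable $\{a^n b^n\}\cup\{a^n c^n\}$ of Figure \ref{fig-DLA-notDL}, where the two ratio-cycles diverge on \emph{distinct} symbols $b,c$; in $L_k$ all $k+1$ ratio-cycles use the same two symbols, so keeping them separated is forced (no ``mixed-ratio'' word belongs to $L_k$, so no two cycles of different ratio may be chained in one accepting run) and separation on a repeated symbol can only be effected by genuinely nondeterministic choices.

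The main obstacle, as I see it, is precisely converting the qualitative statement ``the $k+1$ ratio-cycles must be pairwise separated on identical symbols'' into the quantitative bound $Ndeg\ge k$. One must track the two-head crossing structure of each accepting computation, account for states and branch points shared among several ratios, and show that realizing $k+1$ mutually incompatible ratios cannot cost fewer than $k$ units of explicit nondeterminism. I would try to organize this as a crossing-sequence or fooling-set argument at the meeting point of the two heads, proving that with fewer than $k$ extra transitions two distinct ratios are forced to share an accepting computation, whence $M$ accepts some word outside $L_k$ — a contradiction. Establishing this inequality rigorously, rather than merely the existence of the cycles (which is routine), is the heart of the proof.
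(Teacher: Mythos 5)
Your overall architecture coincides with the paper's: the same definition-unwinding for $Lin(0)=\mathbf{DLin}$ (the paper argues $Ndeg(M)=0$ iff $\mid\delta(q,a)\mid\leq 1$ iff $M$ is a \dla), essentially the same unreachable-branch gadget for the inclusion $Lin(k-1)\subseteq Lin(k)$ (the paper adjoins two fresh states and one transition of out-degree two from a state that is never reached, raising $Ndeg$ by exactly $1$), and the same appeal to Theorems \ref{theo-SLNF-NLA} and \ref{theo-lambda-nla-nla} plus finiteness of $Ndeg(M)$ for $\bigcup_{k}Lin(k)=\mathbf{Lin}$. The only substantive divergence is the witness for strictness: the paper uses the interval language $L(k)=\{a^mb^n : m\leq n\leq (k+1)m\}$ accepted by the degree-$k$ automaton of Figure \ref{fig-L(k)}, whereas you use the exact-ratio union $L_k=\{a^nb^{rn} : n\geq 1,\ 1\leq r\leq k+1\}$; in both cases the branching is concentrated so that the upper bound $Ndeg=k$ is routine.

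The genuine gap in your proposal is the one you yourself flag: the lower bound, that every \nla accepting the witness has degree at least $k$, is only a plan, not a proof. You should know that the paper does no better here --- it disposes of the claim with ``it is evident that it is not possible to construct another \nla $M'$ \ldots\ such that $Ndeg(M')<k$,'' with no argument (only the degree-$\geq 1$ base case is actually proved, in Proposition \ref{prop-DLIn-Lin}, for the language with $1\leq m\leq n\leq 3m$). So your proposal is no less rigorous than the source at this point, and more candid. However, one step of your sketch is factually wrong for your chosen witness: it is \emph{not} true that ``no mixed-ratio word belongs to $L_k$.'' Chaining a ratio-$1$ cycle and a ratio-$3$ cycle with equal multiplicities yields $a^{2n}b^{4n}$, which lies in $L_k$ with ratio $2$ whenever $k\geq 2$. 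What is true, and what any repaired fooling-set or pumping argument must exploit, is that pumping the two cycles with \emph{independent} multiplicities leaves the language (from $n_1+3n_2=2(n_1+n_2)$ one gets $n_1=n_2$), so an accepting run cannot traverse cycles of distinct ratios with free exponents. As stated, your separation claim would collapse the intended contradiction; fix it along these lines, or note explicitly that the theorem as published rests on an unproven assertion at exactly this step.
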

\begin{proof}
 Let $M$ be a \nla. Then $\maL(M)\in Lin(0)$ iff $Ndeg(M)=0$ iff $\mid \delta(q,a)\mid\leq 1$ iff
 $M$ is a \dla iff $\maL(M)\in \mathbf{DLin}$. Therefore $Lin(0)=\mathbf{DLin}$.
 
Let $M=\prt{Q_L,Q_R,\Sigma,\delta,I,F}$ be a \nla such that $\maL(M)\in Lin(k)$ and $Q=\{q'_1,q'_2\}$ a set of 
states such that $Q\cap (Q_L\cup Q_R)=\emptyset$. Then    $M'=\prt{Q_L\cup Q,Q_R,\Sigma,\delta',I,F}$ where for each 
$a\in \Sigma$, we have that 
$\delta'(q,a)=\delta(q,a)$ if $q\in Q_L\cup Q_R$, $\delta'(q'_0,a)=Q$ and $\delta'(q'_i,a)=\emptyset$ for each $i=1,\ldots,k$.
Clearly, $\maL(M')=\maL(M)$ and $Ndeg(M')=Ndeg(M)+1=k+1$. Thus, $\maL(M)\in Lin(k+1)$ and therefore $Lin(k)\subseteq Lin(k+1)$. 
On the other hand, for each $k\in \mathbb{N}$, the language 

$$L(k)=\{a^mb^n \; :\; m\leq n\leq (k+1)m\}$$

is accepted by the \nla of Figure \ref{fig-L(k)} and  
clearly in $L(k)\in Lin(k)$.  Moreover, it is evident that it is not possible to construct another \nla $M'$, for this same 
language, with a lesser degree of explicit nondeterminism., i.e. such that $Ndeg(M')<k$.

Let $L$ be a linear language. Then by Theorems \ref{theo-SLNF-NLA} and \ref{theo-lambda-nla-nla}, there is a \nla $M$ 
such that $\maL(M)=L$. Let $k=Ndeg(M)$ then $L\in Lin(k)$ and therefore, $L\in \bigcup_{k\in\mathbb{N}} Lin(k)$. So, 
$\bigcup_{k\in\mathbb{N}} Lin(k)=\mathbf{Lin}$.
\end{proof}

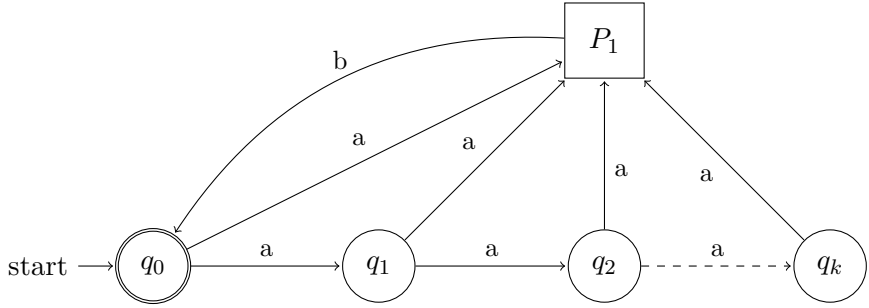
\begin{figure}
\begin{tikzpicture}[shorten >=1pt,node distance=3cm,auto]

\node[state,initial,accepting]  (q_0)                          {$q_0$};
\node[state]          (q_1) [right of=q_0]     {$q_1$};
\node[state]          (q_2) [right of=q_1]     {$q_2$};
\node[state]          (q_k) [right of=q_2]     {$q_k$};
\node[transition, regular polygon sides=\4, minimum size=1cm]          (p_1) [above of=q_2]     { $\;\;\;P_1\;\;\;$};

\path[->] (q_0) edge                 node               {\small a}        (q_1)
                edge                 node               {\small a}          (p_1)   
          (q_1) edge                 node               {\small a}     (q_2)
                edge                 node               {\small a}          (p_1)     
          (q_2) edge  [dashed]       node               {\small a}    (q_k)   
          (q_2) edge                 node [below right] {\small a}    (p_1)   
          (q_k) edge                 node               {\small a}     (p_1)
          (p_1) edge  [bend right]   node [above]       {\small b}     (q_0);
\end{tikzpicture}
\caption{\nla which accept the language $L(k)$}\label{fig-L(k)}
\end{figure}

\section{Even linear languages}

The class of even linear languages was introduced by Amar and
Putzolu in \cite{AP64}. This class properly contains   the class of
regular language and it is properly contained in the class of linear
languages. An important characteristic of this class of language 
is that it allows a solution of the learning problem\footnote{The learning problem for a class of formal languages, 
is the search of ``learning  procedures''  for  acquiring  grammars  on  the
basis of exposure to evidence about languages in the class \cite{Pul03}.} for some subclasses of even linear languages 
based on positive examples such as the class of deterministic even linear languages \cite{KMT97}. 

Basically,
a language is even linear if it is generated by an \textbf{even
linear grammar}, i.e. a linear grammar where each production of
the form $A\raw uBv$ satisfies ${\mid} u{\mid} ={\mid} v{\mid}$
\cite{KMT97,Tak88}. As it is well known, each even linear grammar has
a normal form where each production has either the form $A\raw
uBv$ or the form $A\raw a$, where $\mid u\mid =\mid v\mid=1$  and
$a\in \Sigma\cup\{\lambda\}$ \cite{KMT97}.

Let $M$ be a \nla. $M$ is an \textbf{even \nla} if its
transition diagram is a bipartite graph with  $Q_L$ and $Q_R$ as
its partitions, i.e. if for each $q\in Q_L$, $p\in Q_R$ and
$a\in \Sigma$, $\delta(q,a)\subseteq Q_R$ and
$\delta(p,a)\subseteq Q_L$. For example, the \dla in Figure
\ref{fig-DLA-pal} is an even \nla.
%
%


\begin{Pro}
Let $G$ be an even linear grammar. Then there is an even \nla $M$
such that $\maL(G)=\maL(M)$.
\end{Pro}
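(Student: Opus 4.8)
The plan is to begin by putting $G$ into the even linear normal form cited just above the statement, so that every production has the form $A\raw aBb$ with $a,b\in\Sigma$ and $B\in V$, or the form $A\raw a$ with $a\in\Sigma\cup\{\lambda\}$. The guiding observation is that a single production $A\raw aBb$ consumes exactly one symbol from the left end and one from the right end of the generated string; hence it should be simulated by two consecutive moves of the automaton -- a left move reading $a$ followed by a right move reading $b$. This is exactly the alternation between $Q_L$ and $Q_R$ that the even \nla condition demands, and it mirrors the construction behind Theorem \ref{theo-SLNF-NLA}, since $A\raw aBb$ is morally the pair of \slnf productions $A\raw aC$, $C\raw Bb$.

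For the construction I would take $Q_L=V$ and, for each production $p:A\raw aBb$, introduce a fresh intermediate state $r_p$, together with one further fresh state $f$; then set $Q_R=\{r_p : p \mbox{ has the form } A\raw aBb\}\cup\{f\}$, $I=\{S\}$, and $F=\{f\}\cup\{A\in V : A\raw\lambda\in P\}$. The transition function is defined by putting $r_p\in\delta(A,a)$ and $B\in\delta(r_p,b)$ for each $p:A\raw aBb$, and $f\in\delta(A,a)$ for each production $A\raw a$ with $a\in\Sigma$; all other values are empty, and there are no $\lambda$-transitions. Every transition then runs from $Q_L$ to $Q_R$ (the left move of $A\raw aBb$ and the move of $A\raw a$) or from $Q_R$ to $Q_L$ (the right move out of $r_p$), and $f$ has no outgoing transitions, so the transition diagram is bipartite with parts $Q_L$ and $Q_R$; thus $M$ is an even \nla.

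The heart of the argument is the claim that for every $A\in V$ and $w\in T^*$,
$$A \Rawe w \iff (A,w)\vdashe (q_f,\lambda)\ \mbox{ for some } q_f\in F,$$
which I would establish by induction in both directions. For ($\Rightarrow$) I induct on the length of the derivation: the base cases $A\raw a$ and $A\raw\lambda$ are handled by the move $(A,a)\vdash(f,\lambda)$ and by $A\in F$ respectively, while a step $A\Rightarrow aBb\Rawe aw'b$ is simulated by $(A,aw'b)\vdash(r_p,w'b)\vdash(B,w')$ followed by the accepting computation supplied by the induction hypothesis on $B\Rawe w'$. For ($\Leftarrow$) I induct on the number of moves, starting from the left state $A$: an accepting computation either has length $0$ (forcing $A\in F$, i.e. $A\raw\lambda\in P$), or reads a left symbol straight into $f$ in one move (forcing a production $A\raw a$), or reads a left symbol into some $r_p$ and then a right symbol into an original variable $B$ in two moves, which decode to $A\raw aBb$; the induction hypothesis on $B$ then closes the step. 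Taking $A=S$ and recalling $I=\{S\}$ yields $\maL(G)=\maL(M)$.

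The main obstacle is making the ($\Leftarrow$) decoding rigorous: I must verify that each fresh state is entered only in the intended way, namely that every move into an $r_p$ is immediately followed by exactly the matching right move toward an original variable, and that $f$ is reached only through the single-terminal productions. This is precisely what guarantees that an arbitrary accepting computation factors uniquely into two-move blocks, each corresponding to one production. Using a distinct intermediate state $r_p$ per production together with the sink final state $f$ is what makes this factorization unambiguous, and once it is in place both inductions are routine bookkeeping.
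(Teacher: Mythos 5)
Your proof is correct and takes essentially the same approach as the paper: the paper also starts from the even linear normal form, splits each production $A\raw aBb$ into the \slnf pair $A\raw aC$, $C\raw Bb$ (your intermediate states $r_p$ are exactly these fresh variables $C$), and then applies the grammar-to-automaton construction of Theorem \ref{theo-SLNF-NLA}. The only difference is presentational: you inline that construction and carry out the correctness induction explicitly, whereas the paper invokes the cited theorem as a black box.
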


\begin{proof} Without loss of generality we can suppose that
$G=\prt{V,T,S,P}$ is in the even linear normal form. Let
$G'=\prt{V\cup V',T,S,P'}$ be the grammar obtained as follows:

Start with  $V'=P'=\emptyset$. For each production $A\raw aBb$ in
$P$ add a new variable $C$ to $V'$ and the productions $A\raw aC$
and $C\raw Bb$ to $P'$. Finally, add to $P'$ each production
$A\raw a$ in $P$. Clearly, $G'$ is equivalent to $G$ and it is in
\slnf.

Now, applying the algorithm in \ref{theo-SLNF-NLA}, we will have an
even \nla equivalent to $G$.
 \end{proof}

 Conversely,

\begin{Pro}
Let $M$ be an even \nla. Then there is an even linear grammar $G$
such that $\maL(G)=\maL(M)$.
\end{Pro}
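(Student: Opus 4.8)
The plan is to reverse the construction used in the preceding proposition: there an even linear grammar (in the normal form $A\raw aBb \mid a$) was turned into an even \nla by splitting each balanced production into a left move and a right move. Here I start from the bipartite automaton $M=\prt{Q_L,Q_R,\Sigma,\delta,I,F}$ and recombine each pair of consecutive moves -- one reading from the left and one reading from the right -- into a single balanced production. Because $M$ is even, its transition graph is bipartite, so along any computation the active head alternates sides and the visited states alternate between $Q_L$ and $Q_R$; this is exactly the regularity that lets two machine steps collapse into one even-linear rule.

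Concretely, I would take $V=Q_L\cup Q_R\cup\{S\}$ with $S$ a fresh start symbol and put the following productions in $P$. First, $S\raw q_0$ for every $q_0\in I$. Next, for $q\in Q_L$: whenever $p\in\delta(q,a)$ (so $p\in Q_R$) and $q''\in\delta(p,b)$ (so $q''\in Q_L$) add $q\raw a\,q''\,b$; whenever $\delta(q,a)\cap F\neq\emptyset$ add $q\raw a$; and add $q\raw\lambda$ if $q\in F$. Symmetrically, for $p\in Q_R$: whenever $q\in\delta(p,b)$ and $p'\in\delta(q,a)$ add $p\raw a\,p'\,b$; whenever $\delta(p,b)\cap F\neq\emptyset$ add $p\raw b$; and add $p\raw\lambda$ if $p\in F$. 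Every production carrying a variable on the right is either a unit rule $S\raw q_0$ (empty flanks) or a rule $A\raw a\,A'\,b$ with one terminal on each side, so $\mid u\mid=\mid v\mid$ holds throughout and $G$ is even linear. Note also that $Q_L$-variables rewrite only to $Q_L$-variables and $Q_R$-variables only to $Q_R$-variables, so derivations rooted at a left start state never interact with those rooted at a right start state; this is what lets a single grammar accommodate start states of both types without any case split.

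For correctness I would prove, by induction on the length of the derivation (and, for the converse, on the number of moves), the key equivalence: for each $q\in Q_L\cup Q_R$ and each $w\in\Sigma^*$, $q\Rawe w$ iff $(q,w)\vdashe(q_f,\lambda)$ for some $q_f\in F$. One balanced step $q\raw a\,q''\,b$ is matched by the two moves $(q,aw'b)\vdash(p,w'b)\vdash(q'',w')$, and symmetrically on the right; the terminal rules $q\raw a$ and $q\raw\lambda$ correspond to the final crossing of the heads. Granting this equivalence, $\maL(G)=\bigcup_{q_0\in I}\{w: q_0\Rawe w\}=\bigcup_{q_0\in I}\{w:(q_0,w)\vdashe(q_f,\lambda),\ q_f\in F\}=\maL(M)$, as required.

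The main obstacle I anticipate is the parity bookkeeping in the base cases, i.e. making the two-step pairing agree precisely with the automaton's ``a read head passes over the other'' halting rule. When a computation has odd length its last move is an unpaired single read producing the middle symbol, which must be captured by a rule $q\raw a$ (or $p\raw b$); when it has even length the heads cross on a right (resp. left) move into a final state with the input already exhausted, which must be captured by $q\raw\lambda$. I would need to verify that these rules fire exactly when $\delta$ reaches $F$ at the crossing, and that no accepting computation is lost or spuriously created -- in particular that collapsing two moves into one never conceals an intermediate configuration in which the heads have already crossed. (Equivalently, one could obtain $G$ by applying Theorem \ref{teo-NLA-LG} to $M$ and then eliminating the $Q_R$-type variables by substitution; the direct pairing above is preferable only because start states lying in $Q_R$ would otherwise push the substituted start rules out of even-linear form.)
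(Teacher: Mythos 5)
Your proof is correct and follows essentially the same approach as the paper: the paper obtains the even grammar by applying Theorem \ref{teo-NLA-LG} to $M$ (yielding alternating productions $q\raw ap$, $p\raw qa$, $q\raw\lambda$) and then substituting right-hand sides to fuse each left-move/right-move pair into a single production $A\raw aBb$ or $A\raw a$ --- precisely the two-step collapsing you perform directly on $\delta$, and the very route you acknowledge in your closing parenthetical. Your explicit treatment of the crossing cases ($q\raw a$ versus $q\raw\lambda$) and of the start-state set via $S\raw q_0$ is, if anything, more careful than the paper's terse ``clearly equivalent'' conclusion.
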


\begin{proof} Applying the algorithm
 in Theorem \ref{teo-NLA-LG} to the even \nla $M$, we will obtain a linear grammar $G$ with three  kinds of productions:

 \begin{enumerate}
 \item $q\raw a p$, where $a\in \Sigma$, $q\in Q_L$ and $p\in Q_R$.

 \item $p\raw qa$, where $a\in \Sigma$, $q\in Q_L$ and $p\in Q_R$.

 \item $q\raw \lambda$, where $q\in Q_L\cup Q_R$.
 \end{enumerate}

 Now, we construct a new grammar $G'=\prt{V,T,S,P'}$ from $G=\prt{V,T,S,P}$ as follow:

 For each production $q\raw ap$ in $P$ put in $P'$ the productions $q\raw ax$ for each $p\raw x$ in $P$.
Analogously, for each production $p\raw qa$ in $P$ put in $P'$ the
productions $p\raw xa$ for each $q\raw x$ in $P$.

 Clearly, $G'$ is in the formal norm for even grammar and it is equivalent to $G$.
 \end{proof}
 

\section{Final remarks}

Since \nla is a two-read head model which works two ways, it cannot be
considered as automata model in the sense of an abstract family
of automata as done by Ginsburg \cite{Gin75}. However, \nla can be
considered as automata in the more intuitive and general notion, as for
example, ``An automaton is a device which recognizes or accepts certain elements of $\Sigma^*$,
where $\Sigma$ is a finite alphabet'' \cite{And06} or ``An
automaton is a, construct that possesses all the indispensable features of a
digital computer. It accepts inputs, produces output, may have some temporary
storage, and can make decisions in transforming the input into the
output'' \cite{Lin01}.

The normal form for linear grammars in section
\ref{sec-LG} is useful to turn easier the
proofs in section \ref{sec-NLA} and therefore does not intend to be an
alternative to the well known normal form for the linear grammars.

The contribution of this work was to provide two subclasses of  \nla, namely
\dla and even \nla, which model the subclasses of deterministic
and even linear languages, respectively. In particular, there are differents, but not equivalents,
proposed for the class of ``deterministic'' linear languages and here we proved that \textbf{DLin}, i.e. the class of 
languages accepted by some \dla, contain all those which are originated by a restriction in the linear grammars as 
considered in \cite{HO02}. In addition,  $\mathbf{DLin}$  is not a proper subset of $\mathbf{DCFL}\cap \mathbf{Lin}$, i.e. the class of linear 
languages which are also  deterministic context free languages. In fact, we conjecture that 
$\mathbf{DCFL}\cap \mathbf{Lin}\subset \mathbf{DLin}$. 
The advantage of using the automata models introduced here (\dla and even \nla) is due to their simplicity with respect to other 
automata models for these classes of languages and in the case of \dla, another advantage is that the class of 
languages modeled by this class of automata (which are naturally deterministic) is broader than the several classes of  
deterministic linear languages proposed in the literature. Other minor contribution was providing a characterization of 
\nfa which accept languages also accepted by \dla and a method to obtain this \dla and giving an enumerable hierarchy 
of linear languages starting by the class \textbf{DLin} and having as limits 
the class of linear languages.
 As a future works we can intend to prove the Conjecture \ref{conj-DLin}, compare the class \textbf{DLin} with the classes 
 of left-right determinitic linear languages  introduced in \cite{CRO04,CRO06} and study some closure property of \textbf{DLin}.






\begin{thebibliography}{9}
\bibitem{AP64}  V. Amar, G. Putzolu. On a Family of Linear Grammars.
\emph{Information and Control} \textbf{7} (1964), 283--291.



\bibitem{And06} J.A. Anderson. \emph{Automata Theory with Modern Applications}. Cambridge university press, 2006.

\bibitem{ABB97} J.M. Autebert, J. Berstel, L. Boasson. Context-Free
Languages and Pushdown Automata. In: A. Salomaa, G. Rozenberg (eds.), \emph{Handbook of Formal
Languages, vol. 1: word, language, grammar}, pages 111--174,
Springer-Verlag, New York, 1997.

\bibitem{Bed06} B.C. Bedregal. Pushdown automata free of explicit nondeterminism and
an infinite hierarchy of context-free languages. \emph{Fundamenta
Informaticae} \textbf{81}(4) (2007) 367--377.

\bibitem{Bed10} B.C. Bedregal. $\lambda$-ALN: Aut\^omatos Lineares N\~ao-Determin\'isticos com $\lambda$-Transi\c c\~oes. 
In: XXXIII Congresso Nacional de Matem\'atica Aplicada e Computacional, 2010, \'Aguas de Lind\'oia. 
 S\~ao Carlos-SP: SBMAC, 2010. vol. 3. p. 633--639. 

\bibitem{Bed11} B.C. Bedregal. $\lambda$-ALN: Aut\^omatos Lineares N\~ao-Determin\'isticos com $\lambda$-Transi\c c\~oes. 
\emph{Tend\^encias em Matem\'atica Aplicada e Computacional} \textbf{12} (2011) 171--182. 

\bibitem{BS06} B.C. Bedregal, S. Figueira. On the computing power of fuzzy Turing machines. 
\emph{Fuzzy Sets and Systems} \textbf{159}(9) (2008) 1072--1083.

\bibitem{CRO04} J. Calera-Rubio, J. Oncina. Identifying Left-Right Deterministic Linear Languages. In: 
 G. Paliouras, Y. Sakakibara (eds.),  {\em Proc. of the 7th International Colloquium, ICGI 2004, Athens, 
Greece, October 11-13, 2004}, vol. 3264 of \emph{Lecture Notes in Computer Science}, pages 283--284. Springer Verlag,  2004.

\bibitem{CRO06} J. Calera-Rubio, J. Oncina. Left/Right Deterministic Linear Languages Identification. In: 
F. Pla, P. Radeva and J. Vitri\`a (eds.),  {\em  Pattern Recognition: Progress, Directions and Applications}, pages 
313--326. Computer Vision Center,  Universitat Autònoma de Barcelona, 2006.




\bibitem{CL89} J. Carroll, D. Long. \emph{Theory of Finite Automaton With an Introduction to
Formal Languages}. Prentice-Hall, Englewood Clis, New Jersey, 1989.

\bibitem{Coh91} D.I.A. Cohen. \emph{Introduction to Computer Theory}. 2nd
edition. John Wiley \& Sons, Inc, New York, 1991.

\bibitem{GS66} S. Ginsburg, E.H. Spanier. Finite-turn pushdown automata. \emph{SIAM
Journal on Computing} \textbf{4} (1966), 429--453.

\bibitem{Gin75} S. Ginsburg. {\em Algebraic and
Automata-Theoretic Properties of Formal Languages}. North-Holland,
Amsterdam, 1975.

\bibitem{Har78}  M.A. Harrison. \emph{Introduction
to Formal Language Theory}, Addison-Wesley, Reading, MA, 1978.

\bibitem{Her97} C. Herzog. Pushdown automata with bounded nondeterminism
 and bounded ambiguity. \emph{Theoretical Computer Science}
 \textbf{181} (1997), 141--157.

 \bibitem{HO02} C. de la Higuera, J. Oncina. Inferring Deterministic Linear Languages. In: J.
 Kivinen, R.H. Sloan (eds.),
 {\em Proc. of the $15^{th}$ Annual Conference on Computational Learning Theory $($COLT\/$)$}, vol. 2375 of 
 \emph{Lecture Notes in Computer Science}, pages 185--200. Springer Verlag, 2002,
  .

\bibitem{HE04} H.J. Hoogeboom, J. Engelfriet. Pushdown Automata.
In: \emph{Formal Languages and Applications},  C. Martin-Vide,
V.  Mitrana,  G. Paun  (eds.). Springer, Berlim, 2004.

\bibitem{HU79} J.E. Hopcroft, J.D. Ullman. {\em Introduction to Automata
 Theory, Languages and Computation}, Addison-Wesley, Reading, MA, 1979.

 

\bibitem{Jan95} P. Jancar, F. Mr\'az, M. Pl\'atek, J. Vogel. Restarting
Automata. In:   H. Reichel (ed.), {\em Proc. of the $10^{th}$
Conference on Fundamentals of Computation Theory $($FCT\/$)$}, vol. 965 of \emph{Lecture Notes in Computer Science},
pages 282--292. Springer Verlag, 1995.


\bibitem{Jan99} P. Jancar, F. Mr\'az, M. Pl\'atek, J. Vogel. On Monotonic Automata with a Restart Operation.
 {\em Journal of Automata, Languages and Combinatorics} \textbf{4} (1999), 287--311.

\bibitem{Jur06} T. Jurdzinski, F. Mr\'az, F. Otto, M. Pl\'atek.
 Degrees of non-monoticity for restarting automata.
 \emph{Theoretical Computer Science} \textbf{369} (2006), 1--34.

 \bibitem{KMT97} T. Koshiba, E. M\"akinen, Y. Takada.
 Learning deterministic even linear languages from positive
 examples. \emph{Theoretical Computer Science} \textbf{185} (1997), 63--79.

 \bibitem{LV95}  M. Li, P.M.B. Vitanyi. A new approach to formal language
theory by Kolmogorov complexity, \emph{SIAM J. Comput.}
\textbf{24} (1995), 398--410.

\bibitem{Lin01}  P. Linz. \emph{An Introduction to Formal
Language and Automata}. Jones and Bartlett Publisher, 2001.

\bibitem{Mak96} E. M\"akinen. A Note on the Grammatical Inference Problem
 for Even Linear Languages. \emph{Fundamenta Informaticae} \textbf{25} (1996), 175--181.

\bibitem{Mak03} E. M\"akinen. Inferring Finite Transducer. \emph{Journal of the 
Brazilian Computational Society} \textbf{9} (2003), 5--8.

\bibitem{Mra01}  F. Mr\'az. Lookahead Hierarchies of Restarting Automata.
{\em Journal of Automata, Languages and Combinatorics} \textbf{6}
(2001), 493--506.




\bibitem{Ott03} F. Otto. Restarting Automata and their Relations to the
Chomsky Hierarchy. In: Z. \'Ezik, Z. F\"ul\"op (eds.), {\em Proc.
of the $7^{th}$ International Conference on Development in
Language Theory $($DLT\/$)$}, vol. 2710 of \emph{Lecture Notes in Computer Science} pages 55--74. Springer Verlag, 2003.

\bibitem{Pul03} G.K. Pullum. Learnability. In: William J. Frawley (ed.), \emph{The Oxford 
International Encyclopedia of Linguistics}, second edition, pages 431--434. Oxford University Press, 2003.

\bibitem{Ros67}  A.L. Rosenberg. A Machine Realization of the Linear Context-Free
Languages. \emph{Information and Control} \textbf{10} (1967), 175--188.

\bibitem{SW11} M. Saers, D. Wu.
Linear Transduction Grammars and Zipper Finite-State Transducers. In: {\em Proceedings of Recent Advances in 
Natural Language Processing}, pages 640--647, Hissar, Bulgaria, 12-14 September 2011.

\bibitem{Sal73} A. Salomaa. \emph{Formal Languages}. Academic Press, 1973.

\bibitem{SY94}  K. Salomaa, S. Yu. Measures of nondeterminism for pushdown
 automata, \emph{Journal of Computer and System Sciences}
 \textbf{49} (1994), 362--374.

 \bibitem{Sol75} S.H. von Solms. The characterization by automata
 of certain classes of languages in the context sensitive area.
 \emph{Information and Control} \textbf{27} (1975), 262--271.

\bibitem{Tak88} Y. Takada. Grammatical Inference of Even Linear Languages based
on Control Sets. \emph{Information Processing Letter}
\textbf{28} (1988), 193--199.

\bibitem{VS81} D. Vermeir, W.J. Savitch. On the amount of nondeterminism in pushdown
 automata, \emph{Fundamenta Informaticae}
 \textbf{4} (1981), 401--418.
\end{thebibliography}
\end{document}